\let\csname equation*\endcsname\relax
\let\csname endequation*\endcsname\relax
\newtheorem{theorem}{Theorem}[section]
\newtheorem{remark}[theorem]{Remark}
\newtheorem{definition}[theorem]{Definition}
\newcommand{\JMcomm}[1]{{\textcolor{red}{ #1}}}
\def\footnoterule{\kern-3\p@
  \hrule \@width 2in \kern 2.6\p@} % the \hrule is .4pt high
\begin{document}

\title[]{Notes on open questions within density functional theory
(existence of a derivative of the Lieb functional in a restricted sense and non-interacting $v$-representability)}

\author{
J{\'e}r{\'e}mie Messud
}

%\address{
%$^{1}$ CGG, 27 avenue Carnot, 91341 Massy (France)\\
%$^{2}$ Formerly CGG
%}
%\ead{submissions@iop.org}

\vspace{10pt}
\begin{indented}
\item[]March 8, 2021
\end{indented}

\begin{abstract}
Density functional theory together with the Kohn-Sham scheme represent an efficient framework to recover the ground state density and energy of a many-body quantum system from an auxiliary ``non-interacting'' system (one-body with a local potential).
However, theoretical questions remain open.
An important one is related to the existence of a derivative of the Lieb functional in some restricted (weak) sense (as pointed out by Lammert).
Then, a further one would be related to the validity of the ``non-interacting $v$-representability'' conjecture.
We gather here elements on these questions,
providing reminders and thoughts that are non-conclusive but hopefully contribute to the reflection.
This document must be considered as notes (it does not represent an article).
%Applying static linear response considerations to the non-interacting system,
%we demonstrate that a derivative of the KS potential can be defined in some sense for all non-interacting $v$-representable densities (with some subtleties regarding the degenerate case), and that this derivative is bounded.
%This is argued to be sufficient to imply that the non-interacting $v$-representability conjecture is true.
%This also implies that a derivative of the Lieb functional should exist in some sense.
\end{abstract}

%
% Uncomment for keywords
%\vspace{2pc}
%\noindent{\it Keywords}:
%Density functional theory, Kohn-Sham scheme, non-interacting $v$-representability
%

% Uncomment for Submitted to journal title message
%\submitto{\JPA}
%

% Uncomment if a separate title page is required
%\maketitle
% 
% For two-column output uncomment the next line and choose [10pt] rather than [12pt] in the \documentclass declaration
%\ioptwocol
%

%\small
\section{Introduction}

Density functional theory (DFT)~\cite[]{Hoh64,Par89,Dre90,Koh99r,Engel2011} has
become over the last decades a widely used theoretical tool for the
description and analysis of particles properties in quantum systems.
The essence of DFT is the Hohenberg-Kohn (HK) theorem~\cite[]{Hoh64},
ensuring that ground state many-particle systems in an ``external'' potential $v$ can be completely characterized by the ground state particle density~\cite[]{Par89,Dre90,Engel2011}.
%The many-particle system can typically represent the electrons in a molecule, the interaction with the nuclei being described through $v$.
Indeed, the HK theorem implies that any ground state observable of such a system can be written as a density functional, in particular the energy;
hence the possibility of substituting the involved (many-body and non-local) ``exchange-correlation'' contribution to the energy by a functional of the simple particle density. 
The form of the energy functional may be very complicated, but the proof of its existence leads to the appealing idea of replacing the original interacting (or many-body) system by a non-interacting (or one-body) system that exactly reproduces the ground state density and energy of the interacting system.
This reformulation has originally been proposed by~\cite{Koh65} (KS).
%with the hypothesis that a derivative of the energy functional with respect to the density exists in some sense, leading to the KS scheme.
All quantum effects, including the exchange-correlation ones, are there described through a one-body local potential, 
called the KS potential~\cite[]{Koh65,Par89,Dre90,Engel2011}.
%A notable contribution to the KS potential is the exchange-correlation potential
%(the functional derivative of the exchange-correlation energy with respect to the density).
Consequently, the KS scheme may be computationally very efficient to describe systems with a sizeable number of particles.

Unfortunately, DFT states existence theorems but does not give a clue on the energy functional form.
% and on the exchange-correlation energy functional form.
A whole branch of DFT research is dedicated to set up approximate parameterized functionals.
These are usually based on the local density approximation (LDA), which is widely used to describe, for instance, many-electron systems in a molecule~\cite[]{Par89,Dre90,Engel2011}.

Another branch of DFT research is dedicated to bringing rigorous foundations to the formalism.
In particular, as the KS scheme implementation involves a derivative of the energy functional,
the existence of such a derivative must be established on solid ground.
%at least a functional subdifferentiability of the exact energy functional must be built on solid ground.
This implies, among other things, the energy functional to be defined on a sufficiently large and dense set of admissible densities, the set being provided with a norm.
%The original HK definition of the energy functional~\cite[]{Hoh64,Koh65} is not sufficient to achieve this goal, so various extensions have been proposed.
%The Levy-Lieb formulation~\cite[]{Levy1982,Lieb1983} extends the energy functional to be defined on 
%all ``interacting pure state $v$-representable'' densities,
%i.e. all densities that arise from non-degenerate ground states of a many-body or interacting system with an external potential $v$.
%However, the differentiability of this functional has not been established.
From that perspective, the most promising extension of the original energy functional proposal
%~\cite[]{Hoh64,Koh65} 
seems the \cite{Lieb1983} functional.
% with an external potential $v$.
%\cite{Englisch1984,Englisch1984b} demonstrated that the Lieb
%functional admits a subderivative for a specific subset of its domain.
However, even if \cite{Englisch1984,Englisch1984b} claimed to have demonstrated the existence of a derivative
of the Lieb functional,
\cite{Lammert2007} pointed out that the functional is in fact nowhere G\^ateaux differentiable.
%only the existence of a subderivative has been proven.
%This is sufficient to build the KS potential and the corresponding non-interacting system equations on solid ground.
%even if their differentiability claim has recently been proven to be falacious \cite[]{Lammert2007,Kvaal2014}.
%%Since recently, it was thought that~\cite{Englisch1984,Englisch1984b} have demonstrated that the Lieb
%%functional is differentiable, but in fact only subdifferentiability hs been proven.
%%The Lieb functional is the only one that so far that has been demonstrated to satisfy the good properties regarding differentiability~\cite[]{Lieb1983,Englisch1984,Englisch1984b}.
%%Thus, any formal consideration on DFT must be based on this functional.
%To that aim, we consider the Lieb functional of the interacting system, 
%here called interacting Lieb functional,
%and further introduce a Lieb functional for the non-interacting system,
%here called non-interacting Lieb functional.
%However, considering subdifferentiability of the Lieb functional is sufficient to build the KS equations on solid ground.
\cite{Lammert2007} discussed how a derivative may exist in a weaker sense, considering 
among other restrictions on the perturbation (or density variation) directions
(which hopefully are the physical ones).
%In other terms, a unique subderivative exists only for certain directions
%(which hopefully are the physical ones).
%These considerations are to define the the KS potential  and its
%exchange-correlation component.
Taking another point of view, \cite{Kvaal2014} proposed to include a regularization term in the Lieb functional
and to work in a finite volume to obtain a differentiable DFT formulation.
However, even if the volume can be arbitrarilly large, the formalism cannot be applied to an infinite volume like in usual quantum mechanics frameworks.
The existence of a derivative of the Lieb functional in the general case thus still represents an open question.

Now, even if the Lieb functional could be proven admit a derivative in a restricted sense,
a further difficulty may occur:
the interacting and non-interacting Lieb functionals may not be differentiable
on the same set.
This difficulty 
%occured considering the false theorems of \cite{Englisch1984,Englisch1984b} and
has been called ``non-interacting $v$-representability question''~\cite[]{Par89,Dre90,Engel2011}.

The first part of these notes provides a review of the Lieb functional and of the KS scheme.
\JMcomm{
The ``false'' theorems claimed by \cite{Englisch1984,Englisch1984b} and their consequences are mentioned, as these are very common in DFT, but are colored in red to highlight they are false.
}
Then, the work of \cite{Lammert2007} and open questions are mentioned.
The second part provides non-conclusive thoughts on the non-interacting $v$-representability question.
Among others, static linear response considerations are applied to the non-interacting system
(introducing a specific equivalence class for degenerate ground state densities).
Our hope is that these notes contribute to the reflection, possibly for a more conclusive exploration.

%This is interesting as:
%%
%\begin{enumerate}
%\item
%The existence of such a derivative should imply that it is possible to define in some sense a derivative (and not only a subderivative) of the Lieb functional.
%%It represents a way to add physical constraints to the problem \cite{Lammert2007}.
%\item
%The existence of a bounded derivative is demonstrated to be sufficient to imply the non-interacting $v$-representability.
%\end{enumerate}
%%
%The third part of the article is dedicated to point (ii).

\section{Lieb functional and Kohn-Sham scheme reminders}
\label{sec:reminders}

\subsection{{{Interacting system and Lieb functional}}}
\label{sec:Notations}

We consider a stationary quantum system of $N$ interacting identical particles, called interacting system,
whose Hamiltonian is
\begin{eqnarray}
\label{eq:H}
\hat{H}_v = \hat{T}+\hat{W}+\hat{V}[v]
.
\end{eqnarray}
$\hat{T}$ denotes the kinetic energy operator and $\hat{W}$ denotes the particles interaction operator (supposed to be symetric and spin independent).
$\hat{V}[v]=\int_{\mathbb{R}^3}v(r)\hat{n}(r)dr$ denotes a potential operator defined through an ``external'' potential $v(r)$, and the diagonal part of the particle density operator,
\begin{eqnarray}
\label{eq:n1}
\hat{n}(r) = \sum_{i=1}^{N}\delta(r-r_i)
.
\end{eqnarray}
$r_i$, indexed by $i\in\{1,...,N\}$, denotes the position of a particle.
Any following reasoning consider $(r,r_i)\in \mathbb{R}^3\times \mathbb{R}^3$, and fixed $N$ and $\hat{W}$.

A Banach space is a (possibly infinite dimensional) vector space with a norm
\cite[]{Rudin1991,Brezis1983}.
We consider the $\mathscr{L}^{p}$ ($p\ge 1$) Banach spaces defined by 
%($M\in \mathbb{N}^*$)
%
\begin{eqnarray}
\label{eq:Banach}
\mathscr{L}^p(\mathbb{R}^M)
=
\Big\{
f(x) \Big|
x\in\mathbb{R}^M,
||f||_p<\infty
\Big\}
\text{ with norm }
||f||_p=\Big(\int_{\mathbb{R}^M}|f(x)|dx\Big)^\frac{1}{p}
.
\end{eqnarray}
When $p=\infty$, $||f||_\infty=\sup_{r\in\mathbb{R}^3}|f(r)|$ which is called the uniform norm.
The Banach space of admissible potentials $v$ for the Hamiltonian in eq.~(\ref{eq:H}) is
\cite[]{Lieb1983,vanLeeuwen2003,Engel2011}
\begin{eqnarray}
\label{eq:V}
&&
\mathscr{V}
=
\Big\{
v=v_1+v_2 \Big|
v_1\in\mathscr{L}^\frac{3}{2}(\mathbb{R}^3),
v_2\in\mathscr{L}^\infty(\mathbb{R}^3)
\Big\}
\\
&&
\hspace{2cm}
\text{ with norm }
||v||_{\mathscr{V}}=
\stackrel[v_1+v_2=v]{}{\inf_{v_1\in\mathscr{L}^\frac{3}{2}(\mathbb{R}^3),v_2\in\mathscr{L}^\infty(\mathbb{R}^3)}}
||v_1||_\frac{3}{2}+||v_2||_\infty
,
\nonumber
\end{eqnarray}
where potentials that differ by an additive constant are identified.

We also consider the Sobolev space
\begin{eqnarray}
\label{eq:Sobolev}
&&
\mathscr{H}^1(\mathbb{R}^M)
=
\Big\{
f \Big|
f\in\mathscr{L}^2(\mathbb{R}^M),
\nabla f\in\vec{\mathscr{L}^2}(\mathbb{R}^M)
\Big\}
\\
&&
\hspace{2cm}
\text{ with norm }
||f||_\mathscr{H}=\Big(\int_{\mathbb{R}^M}(|f(x)|^2+|\nabla f(x)|^2)dx\Big)^\frac{1}{2}
.
\nonumber
\end{eqnarray}
In the following, we will be interested in the particle density defined through eq.~(\ref{eq:n1}) by
\begin{eqnarray}
\label{eq:n2}
n(r) = \langle\psi|\hat{n}(r)|\psi\rangle
,
\end{eqnarray}
where $|\psi\rangle$ can represent any eigenstate of $\hat{H}_v$,
thus among others the ground state that will interest us further.
More generally, $|\psi\rangle$ can be any state with $\psi\in \mathscr{H}^1(\mathbb{R}^{3N})$.
A general convex set of pertinent densities is \cite[]{Lieb1983,vanLeeuwen2003,Engel2011}
\begin{eqnarray}
\label{eq:S}
\mathscr{S}
=
\Big\{
n \Big|
%r\in\mathbb{R}^3,
n(r)\ge 0, \int_{\mathbb{R}^3}n(r)dr=N,\sqrt{n}\in \mathscr{H}^1(\mathbb{R}^3)
\Big\}
%\text{ with norms }
%||n||_p
%\text{ for } p=1,3
,
\end{eqnarray}
called ``$N$-representable densities set''.
%All densities that differ by a multiplicative constant can be implicitly identified in $\mathscr{S}$.
The $||.||_{p=1}$ and $||.||_{p=3}$ norms, here denoted by $||.||_{1,3}$, can be associated to $\mathscr{S}$ as $\mathscr{S}\subset\mathscr{X}$
where $\mathscr{X}$ is the Banach space\footnote[1]{
The topological dual $\mathscr{X}^*$ of $\mathscr{X}$ is ``represented'' by $\mathscr{V}$, eq.~(\ref{eq:V}).
Similarly, $||.||_{\mathscr{V}}$ ``represents'' the dual norm of $||.||_{1,3}$.
See Riesz representation theorem related considerations for instance in \cite{Brezis1983} (chapter I) and \cite{Rudin1991} (chapter IV).
}
\begin{eqnarray}
\label{eq:N}
\mathscr{X}
=
\mathscr{L}^1(\mathbb{R}^3)
\cap
\mathscr{L}^3(\mathbb{R}^3)
.
\end{eqnarray}

\cite{Lieb1983} proposed a ``universal'' functional of the density
(for given $N$ and $\hat{W}$), here called the interacting Lieb functional, defined by
\begin{eqnarray}
\label{eq:Lieb}
\forall n\in\mathscr{S}:
\quad
&F_L[n]=&\inf_{\hat{D}\in\mathscr{D}(n)}
\text{Tr } \hat{D}(\hat{T}+\hat{W})
,
\end{eqnarray}
where Tr denotes the trace operator
and $\mathscr{D}(n)$ denotes the set of $N$-particle density matrices that yield a gives density $n$
\begin{eqnarray}
\label{eq:D}
\mathscr{D}(n)
=
\Big\{
\hat{D} \Big|
&&
\hat{D}=\sum_k d_k |
\psi_k\rangle\langle\psi_k|,d_k^*=d_k\ge 0,\sum_k d_k=1,
\\
&&
\langle\psi_k|\psi_l\rangle=\delta_{kl},\psi_k\in \mathscr{H}^1(\mathbb{R}^{3N}),
n(r)=\sum_k d_k \langle\psi_k|\hat{n}(r)|\psi_k\rangle
\Big\}
.
\nonumber\
\end{eqnarray}
The interacting Lieb functional can be extended to the whole Banach space $\mathscr{X}$ by posing, in addition to eq.~(\ref{eq:Lieb}),
\begin{eqnarray}
\label{eq:Lieb2}
\forall n\in \mathscr{X}\backslash\mathscr{S}:
&F_L[n]=& +\infty
.
\end{eqnarray}
$\mathscr{S}$ represents the domain of $F_L[n]$.
\begin{theorem}[Demonstrated by \cite{Lieb1983}]
\label{th:Lieb_csc}
$F_L[n]$ is convex and lower semi-continuous on $\mathscr{X}$.
\end{theorem}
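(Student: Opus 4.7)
The plan is to handle the two assertions separately, with convexity following essentially from the definition and lower semi-continuity requiring a compactness/duality argument.

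For convexity, I would fix $n_1,n_2\in\mathscr{X}$ and $\lambda\in[0,1]$. If either $n_i\notin\mathscr{S}$, the right-hand side of the convexity inequality is $+\infty$ and there is nothing to prove. Otherwise, from eq.~(\ref{eq:S}) the set $\mathscr{S}$ is convex (each of the constraints $n\ge 0$, $\int n=N$, $\sqrt{n}\in\mathscr{H}^1$ is preserved under convex combination, the last via $|\nabla\sqrt{\lambda n_1+(1-\lambda)n_2}|^2\le\lambda|\nabla\sqrt{n_1}|^2+(1-\lambda)|\nabla\sqrt{n_2}|^2$), so $n_\lambda:=\lambda n_1+(1-\lambda)n_2\in\mathscr{S}$. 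Given $\epsilon>0$, pick near-optimizers $\hat{D}_i\in\mathscr{D}(n_i)$ with $\mathrm{Tr}\,\hat{D}_i(\hat{T}+\hat{W})\le F_L[n_i]+\epsilon$. The combination $\hat{D}_\lambda=\lambda\hat{D}_1+(1-\lambda)\hat{D}_2$ is still non-negative and trace-one, admits a spectral decomposition with $\mathscr{H}^1(\mathbb{R}^{3N})$ eigenstates, and yields density $n_\lambda$ by linearity of $n[\cdot]$, so $\hat{D}_\lambda\in\mathscr{D}(n_\lambda)$. Linearity of the trace then gives $F_L[n_\lambda]\le\mathrm{Tr}\,\hat{D}_\lambda(\hat{T}+\hat{W})\le\lambda F_L[n_1]+(1-\lambda)F_L[n_2]+\epsilon$, and $\epsilon\to 0$ concludes.

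For lower semi-continuity, I would take $n_k\to n$ in the $||\cdot||_{1,3}$ norm of $\mathscr{X}$ and show $F_L[n]\le\liminf_k F_L[n_k]$. Extracting a subsequence one may assume $F_L[n_k]\to L:=\liminf_k F_L[n_k]$, and further assume $L<\infty$ (else trivial), so all $n_k\in\mathscr{S}$. Choose $\hat{D}_k\in\mathscr{D}(n_k)$ with $\mathrm{Tr}\,\hat{D}_k(\hat{T}+\hat{W})\le F_L[n_k]+1/k$. Using that $\hat{W}$ is relatively $\hat{T}$-bounded on $\mathscr{H}^1(\mathbb{R}^{3N})$ (Kato-type estimates, which are also the reason behind the choice of $\mathscr{V}$ in eq.~(\ref{eq:V})), the uniform bound on $\mathrm{Tr}\,\hat{D}_k(\hat{T}+\hat{W})$ converts into a uniform bound on $\mathrm{Tr}\,\hat{D}_k\hat{T}$, hence into uniform $\mathscr{H}^1$-bounds on the eigenstates $\psi_{k,j}$ weighted by $d_{k,j}$. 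A diagonal weak-compactness argument then produces a candidate limit density matrix $\hat{D}$, and weak lower semi-continuity of the non-negative quadratic forms $\hat{T}$ and (once extended by Kato) $\hat{W}$ gives $\mathrm{Tr}\,\hat{D}(\hat{T}+\hat{W})\le L$.

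The main obstacle will be identifying that limit properly: one has to guarantee that no weight escapes in the spectral index $j$ (so $\hat{D}$ is still trace-one and non-negative), and above all that the density of $\hat{D}$ is indeed the prescribed $n$. The latter uses that weak $\mathscr{H}^1$-convergence of the $\psi_{k,j}$ implies strong $\mathscr{L}^p_{\rm loc}$-convergence of the densities via the Rellich--Kondrachov theorem, combined with the convergence $n_k\to n$ in $\mathscr{X}$ to rule out escape to infinity. A more economical alternative, which is the route Lieb actually takes, is to bypass this compactness analysis by convex duality: introduce the interacting ground-state energy $E[v]=\inf_{n\in\mathscr{S}}\{F_L[n]+\int vn\,dr\}$, which is manifestly concave and upper semi-continuous on $\mathscr{V}$, and show $F_L[n]=\sup_{v\in\mathscr{V}}\{E[v]-\int vn\,dr\}$. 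As a supremum of continuous affine functionals on $\mathscr{X}$, this representation yields convexity and lower semi-continuity for free; the technical core is then transferred to the Legendre matching $F_L=F_L^{**}$, where the compactness arguments above re-enter.
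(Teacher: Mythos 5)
The paper does not supply a proof of this theorem; it simply records it as established by \cite{Lieb1983}, so there is no in-paper argument to compare against. Your reconstruction matches Lieb's actual route closely: convexity by forming the convex combination of near-optimal density matrices (your inequality for $|\nabla\sqrt{\lambda n_1+(1-\lambda)n_2}|^2$ is the joint convexity of $(p,q)\mapsto|p|^2/q$, and the point that eigenfunctions of $\hat{D}_\lambda$ lie in $\mathscr{H}^1$ because $\mathrm{Tr}\,\hat{D}_\lambda\hat{T}<\infty$ is exactly the care Lieb takes); and lower semi-continuity obtained not by the direct compactness argument you first sketch but by identifying $F_L$ with the Legendre transform $F_L[n]=\sup_{v\in\mathscr{V}}\{E[v]-\int vn\}$, a supremum of $\|\cdot\|_{1,3}$-continuous affine functionals on $\mathscr{X}$, with the compactness work relocated to the proof that $F_L=F_L^{**}$. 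Your proposal is correct and is essentially the argument of the cited reference.
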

\noindent
The ``interacting energy functional'' is defined by
\begin{eqnarray}
\label{eq:Lieb_en}
\forall n\in\mathscr{S}:
\quad
E_v[n] = F_L[n]+\int_{\mathbb{R}^3}v(r)n(r)dr
.
\end{eqnarray}
%
%$E_v[n]$ is obviously convex as $F_L[n]$ is convex.
$E_v[n]$ is convex with respect to $n$ as $F_L[n]$ is.

\begin{theorem}[Demonstrated by \cite{Lieb1983}]
The minimum of the interacting energy functional,
\begin{eqnarray}
\label{eq:Lieb_en_v}
E[v] = \inf_{n\in \mathscr{S}} E_v[n]
,
\end{eqnarray}
equals the ground state energy of the Hamiltonian in eq.~(\ref{eq:H})
for any external potential $v\in\mathscr{V}$ that yields a ground state.
\end{theorem}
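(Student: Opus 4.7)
The plan is to apply a double-infimum (constrained-search) decomposition to the Rayleigh-Ritz variational characterization of the ground state energy, following Lieb's original construction. I would first recall that, for $v\in\mathscr{V}$, the operator $\hat{H}_v$ is self-adjoint and bounded below on a dense domain (via KLMN-type arguments applied to the Kato-Rellich decomposition $v=v_1+v_2$ built into the definition of $\mathscr{V}$), and that its ground state energy, when it exists, coincides with the Rayleigh-Ritz infimum $E_0[v]=\inf_{\hat{D}}\text{Tr }\hat{D}\hat{H}_v$ taken over the admissible $N$-particle density matrices (the extension of the pure-state Rayleigh-Ritz principle to mixtures being immediate by convexity of the spectrum).

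The next step is to partition the set of admissible density matrices by the one-body density they produce: every $\hat{D}$ yields some $n\in\mathscr{S}$, and conversely every $n\in\mathscr{S}$ is $N$-representable so $\mathscr{D}(n)\neq\emptyset$ (Harriman's construction for pure states, extended by Lieb to mixtures). This lets me rewrite
\[
E_0[v] \;=\; \inf_{n\in\mathscr{S}}\;\inf_{\hat{D}\in\mathscr{D}(n)} \text{Tr }\hat{D}(\hat{T}+\hat{W}+\hat{V}[v]).
\]
The crucial observation is that $\text{Tr }\hat{D}\hat{V}[v]=\int_{\mathbb{R}^3} v(r)n(r)\,dr$ depends only on the density $n$ produced by $\hat{D}$, not on the particular $\hat{D}\in\mathscr{D}(n)$, so the potential piece factors out of the inner infimum. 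What remains inside is $\inf_{\hat{D}\in\mathscr{D}(n)}\text{Tr }\hat{D}(\hat{T}+\hat{W})=F_L[n]$ by eq.~(\ref{eq:Lieb}), and adding back the pulled-out potential term reconstructs $E_v[n]$. Taking the outer infimum then yields $E_0[v]=\inf_{n\in\mathscr{S}} E_v[n]=E[v]$, which is the claim.

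The main obstacle I expect is the rigorous control of the possibly unbounded traces $\text{Tr }\hat{D}\hat{T}$ and $\text{Tr }\hat{D}\hat{V}[v]$ for general mixed states, together with making the $N$-representability step clean: one has to justify that for every $n\in\mathscr{S}$ there exists at least one $\hat{D}\in\mathscr{D}(n)$ with finite kinetic energy, so that the inner infimum is not vacuously $+\infty$ on a set where the outer infimum ought to be finite. The Banach-space choice $v_1\in\mathscr{L}^{3/2}$, $v_2\in\mathscr{L}^\infty$ combined with $\sqrt{n}\in\mathscr{H}^1$ is precisely tuned so that H\"older's and Sobolev's inequalities make $\int v n$ finite and continuous in $n$ with respect to the $||.||_{1,3}$ norm used on $\mathscr{X}$, which is what ties the two infima together cleanly; the representability itself is delivered by the Harriman construction extended by Lieb, ensuring that the outer infimum genuinely runs over all of $\mathscr{S}$ without omitting any density that could attain a lower energy.
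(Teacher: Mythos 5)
Your proposal is correct and follows essentially the same route as the cited source: the paper itself gives no in-text proof (it simply attributes the result to Lieb~1983), and the double-infimum constrained-search decomposition you set up --- splitting the Rayleigh--Ritz infimum over density matrices into an outer infimum over $n\in\mathscr{S}$ and an inner infimum over $\hat{D}\in\mathscr{D}(n)$, pulling out the density-dependent potential trace, and identifying the remaining inner infimum with $F_L[n]$ --- is precisely the Levy--Lieb argument that the citation points to. You also correctly flag the two technical ingredients that make the decomposition legitimate (the form-boundedness of $v=v_1+v_2$ with $v_1\in\mathscr{L}^{3/2}$, $v_2\in\mathscr{L}^\infty$, and the $N$-representability of every $n\in\mathscr{S}$ by a finite-kinetic-energy density matrix via Harriman/Lieb), so nothing essential is missing.
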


In the non-degenerate case, i.e. when a unique ground state is associated to $\hat{H}_v$, there is a one-to-one correspondence between $v$ and the corresponding minimizing (ground state) density.
In the degenerate case, however, the many minimizing (ground states) densities related to the degenerate ground states of $\hat{H}_v$ are identified;
this provides a one-to-one correspondence between $v$ and the equivalence {class} of these ground state densities~\cite[]{Par89,Dre90,Engel2011}.
Then, the potential $v$ can be considered as a functional of the ground state density and conversely.
%$F_L[n]$, eq.~(\ref{eq:Lieb}), is convex with respect to $n$~\cite[]{Lieb1983}, which is a desirable property from a formal point of view, among other things regarding differentiability.

\JMcomm{
\begin{theorem}[False theorem by \cite{Englisch1984,Englisch1984b}]
\label{th:Lieb_diff}
$F_L[n]$ admits a G\^ateaux derivative only in a subset $\mathscr{B}\subset\mathscr{S}$, where $\mathscr{B}$ is the set of interacting (ensemble) $v$-representable densities defined by
\begin{eqnarray}
\label{eq:B}
\mathscr{B}
=
\Big\{
n \Big|
&&
n(r)=\sum_{i=1}^q c_i \langle\psi_i|\hat{n}(r)|\psi_i\rangle,c_i^*=c_i\ge 0,\sum_{i=1}^q c_i=1,
\\
&&
|\psi_i\rangle \text{ belongs to the set of ground states of } \hat{H}_v \text{ for some } v
\Big\}
.
\nonumber
\end{eqnarray}
The derivative is defined by
\begin{eqnarray}
\label{eq:Lieb_diff}
\forall n\in\mathscr{B}:
\quad
%\partial F_L[n]=-v
\frac{\partial F_L[n]}{\partial n(r)}=-v(r)
,
\end{eqnarray}
where $v\in \mathscr{V}$ would be the potential that generates $n\in\mathscr{B}$,
%$\partial F_L[n]$ denotes the subdifferential of $F_L[n]$ \cite[]{Rockafellar1970},
considering the equivalence class of potentials defined up to an additive constant.
%However, the existence of a unique subdifferential cannot be deduced from the proof of \cite{Englisch1984,Englisch1984b}.
\end{theorem}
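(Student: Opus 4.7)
The plan is to combine the convex-duality characterization of subgradients with the Hohenberg-Kohn uniqueness of the external potential. The strategy has three steps: (i) show that for $n\in\mathscr{B}$, the functional $-v$ is a subgradient of $F_L$ at $n$; (ii) upgrade this subgradient to a G\^ateaux derivative by invoking uniqueness of the potential (and hence uniqueness of the subgradient); (iii) exclude the case $n\in\mathscr{S}\backslash\mathscr{B}$ by observing that a subgradient would imply $v$-representability.

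For step (i), I would fix $n\in\mathscr{B}$ and let $v\in\mathscr{V}$ be a potential admitting $n$ as an ensemble ground-state density. Then $n$ minimizes $E_v[\cdot]=F_L[\cdot]+\int v(r)\,\cdot\,(r)\,dr$ over $\mathscr{S}$ by the preceding theorem, so for every $n'\in\mathscr{S}$ one has $F_L[n']-F_L[n]\geq -\int v(r)(n'(r)-n(r))\,dr$. This is exactly the defining inequality for $-v\in\partial F_L[n]$, the convex subdifferential at $n$. For step (ii), Theorem~\ref{th:Lieb_csc} provides convexity and lower semi-continuity of $F_L$ on $\mathscr{X}$, and the Hohenberg-Kohn theorem forces $v$ to be unique modulo an additive constant. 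Since $\mathscr{V}$ identifies potentials differing by a constant, $\partial F_L[n]$ is a singleton, and the classical result of convex analysis (as in Rockafellar, or chapter~I of Ekeland--Temam) that a proper convex lower semi-continuous functional which is continuous at $n$ with a singleton subdifferential is G\^ateaux differentiable there, with derivative equal to that unique subgradient, delivers $\partial F_L[n]/\partial n(r)=-v(r)$. For step (iii), I would assume $n\in\mathscr{S}\backslash\mathscr{B}$ admits a G\^ateaux derivative, write it as $-v$ with $v\in\mathscr{V}$, and note the subgradient inequality would force $n$ to minimize $E_v[\cdot]$, putting $n\in\mathscr{B}$ and yielding a contradiction.

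The principal obstacle lies in step (ii): the implication ``singleton subdifferential $\Rightarrow$ G\^ateaux differentiable'' for a convex lower semi-continuous functional requires continuity (or at least some interior-point property) of $F_L$ at $n$ with respect to the norm $||\cdot||_{1,3}$ on $\mathscr{X}$. Since $F_L$ takes the value $+\infty$ on $\mathscr{X}\backslash\mathscr{S}$ by eq.~(\ref{eq:Lieb2}), and since $\mathscr{S}$ may have empty interior in the $(1,3)$-norm topology (one can perturb $n$ by adding a small density bump whose square root leaves $\mathscr{H}^1(\mathbb{R}^3)$), this hypothesis is delicate; establishing it, or otherwise replacing it by a direct computation of the difference quotient $(F_L[n+t\delta n]-F_L[n])/t$ along admissible directions $\delta n$, is where I would expect the argument to break down. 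This is indeed where \cite{Lammert2007} will later show the statement fails as written, since the missing continuity cannot be supplied and the difference quotient diverges along too many directions.
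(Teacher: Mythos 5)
There is no proof of this statement in the paper for you to match: the theorem is explicitly flagged as a \emph{false} theorem of \cite{Englisch1984,Englisch1984b} (hence the red colouring), and the paper's only commentary on it is the citation of \cite{Lammert2007}, who showed that $F_L$ is in fact nowhere G\^ateaux differentiable (see \S\ref{sec:beyonf_subdif}). So no correct proof can exist, and the relevant question is whether your reconstruction locates the fallacy.

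It does, and accurately. Step (i) of your sketch is sound and remains valid: for $n\in\mathscr{B}$, the variational characterization of $E[v]$ gives $F_L[n']-F_L[n]\ge -\int v\,(n'-n)$ for all $n'\in\mathscr{S}$, i.e.\ $-v\in\partial F_L[n]$; this subgradient statement is the part of the Englisch--Englisch claim that survives, and it is what the paper returns to in \S\ref{sec:beyonf_subdif} when it restricts attention to directions $h=n_1-n$ with $n_1\in\mathscr{S}$. Step (iii) is also unobjectionable. The gap is exactly where you place it, in step (ii): upgrading a singleton subdifferential to a G\^ateaux derivative requires $F_L$ to be finite, indeed norm-continuous, on a full $||\cdot||_{1,3}$-neighbourhood of $n$, but eq.~(\ref{eq:Lieb2}) sets $F_L=+\infty$ on $\mathscr{X}\backslash\mathscr{S}$ and $\mathscr{S}$ has empty interior in $\mathscr{X}$, so $F_L'[n,h]=+\infty$ for a dense set of directions $h$ and linearity in $h$ over all of $\mathscr{X}$ cannot hold. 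This is precisely the observation of \cite{Lammert2007} that the paper invokes to invalidate the theorem. One refinement worth noting: the situation is worse than your closing sentence suggests, since even after restricting $h$ to the admissible cone $\{n_1-n:\ n_1\in\mathscr{S}\}$ linearity of the difference quotient is not automatic---Lammert established it only under additional regularity and strict-positivity constraints the paper considers physically too restrictive---which is why \S\ref{sec:beyonf_subdif} leaves the existence of even a restricted derivative of $F_L$ on $\mathscr{B}$ as an open conjecture rather than a theorem.
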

}

\noindent
The set $\mathscr{B}$ includes the interacting pure state $v$-representable densities ($q=1$ restriction).
The $q>1$ possibility allows to deal with the general degenerate (ensemble) interacting system case.

\subsection{{Non-interacting system}}
\label{sec:Non-int}

We now consider a stationary system of $N$ identical particles whose Hamiltonian is
\begin{eqnarray}
\label{eq:H_s}
\hat{H}_s = \hat{T}+\hat{V}[v_s]
,
\end{eqnarray}
where $\hat{V}_s[v_s]=\int_{\mathbb{R}^3}v_s(r)\hat{n}(r)dr$ denotes a potential operator
defined through a potential $v_s(r)$.
Compared to the interacting system Hamiltonian in eq.~(\ref{eq:H}),
the non-interacting system Hamiltonian in eq.~(\ref{eq:H_s}) omits the particles interaction operator.
%(i.e. the interaction term $\hat{W}$ is omitted).
%
The ``non-interacting Lieb functional'' is defined through
\begin{eqnarray}
\label{eq:Lieb_s}
\forall n\in\mathscr{S}:
\quad
T_L[n]=\inf_{\hat{D}\in\mathscr{D}(n)}
\text{Tr } \hat{D}\hat{T}
,
\end{eqnarray}
where the particles interaction term is omitted compared to the interacting Lieb functional in eq.~(\ref{eq:Lieb}).

As in \S\ref{sec:Notations}, 
there is in the non-degenerate case a one-to-one correspondence between $v_s$ and the corresponding minimizing density $n_s$.
In the degenerate case, there is a one-to-one correspondence between $v_s$ and the equivalence {class} of ground state densities obtained from the degenerate ground states of $\hat{H}_s$.
%(i.e. the many minimizing densities $n_s^*$ are all identified).
%At least one of these densities agrees with the ground state density $n^*$ of the interacting system that lies in $\mathscr{B}\cap\mathscr{B}_s$.
%The classe is thus identified to its $n^*$ element.
Then, the potential $v_s$ can be considered as a functional of the density $n_s$, i.e. $v_s[n_s]$, and conversely~\cite[]{Par89,Dre90,Engel2011}.
%
%The ``non-interacting energy functional'' is defined by
%%
%\begin{eqnarray}
%\label{eq:Lieb_en_ni}
%\forall n\in\mathscr{S}:
%\quad
%E_{v_s}[n] = T_L[n]+\int_{\mathbb{R}^3}v_s(r)n(r)dr
%.
%\end{eqnarray}

\JMcomm{
A consequence of the \textbf{False Theorem \ref{th:Lieb_diff}} would be:
$T_L[n]$ admits a G\^ateaux derivative
only in a subset $\mathscr{B}_s\subset\mathscr{S}$,
defined by~\cite[]{Englisch1984,Englisch1984b}
\begin{eqnarray}
\label{eq:Lieb_diff_s}
\forall n_s\in\mathscr{B}_s:
\quad
\frac{\partial T_L[n_s]}{\partial n_s(r)}=-v_s(r)
%\partial T_L[n_s]=-v_s
,
\end{eqnarray}
where $v_s\in \mathscr{V}$ is the potential that generates $n_s\in\mathscr{B}_s$,
called the KS potential, considering the equivalence class of potentials defined up to an additive constant.
$\mathscr{B}_s$ is the set of non-interacting (ensemble) $v$-representable densities defined by
\begin{eqnarray}
\label{eq:B_s}
\mathscr{B}_s
=
\Big\{
n_s \Big|
&&
n_s(r)=\sum_{i=1}^{q_s} c_{s_i} \langle\psi_{s_i}|\hat{n}(r)|\psi_{s_i}\rangle,c_{s_i}^*=c_{s_i}\ge 0,\sum_{i=1}^{q_s} c_{s_i}=1,
\\
&&
|\psi_{s_i}\rangle \text{ belongs to the set of ground states of } \hat{H}_s  \text{ for some } v_s
\Big\}
.
\nonumber
\end{eqnarray}
}

\noindent
The set $\mathscr{B}_s$ includes the non-interacting pure state $v$-representable densities set denoted by $\mathscr{A}_s$ ($q_s=1$ restriction).
The $q_s>1$ possibility allows to deal with the general degenerate non-interacting system case.

\JMcomm{
We now conjecture that some (possibly weaker but sufficient) form of \textbf{Theorem \ref{th:Lieb_diff}} exists
and follow the traditional DFT reasonment~\cite[]{Par89,Dre90,Engel2011} as a reminder.
From eqs.~(\ref{eq:Lieb_diff}) and (\ref{eq:Lieb_diff_s}), we notice that the sets on which the interacting and non-interacting Lieb functionals would be differentiable,
respectively $\mathscr{B}$ and $\mathscr{B}_s$, are not the same.
Let us first consider the case of an interacting system ground state density $n\in \mathscr{B}\cap\mathscr{B}_s$.
The main outcome of previous considerations is that such densities can be reproduced by a non-interacting system using in eq. (\ref{eq:H_s}) the potential $v_s[n]$ computed by eq.~(\ref{eq:Lieb_diff_s}) with $n_s=n$,
i.e. through
\begin{eqnarray}
\label{eq:KS}
&&
\forall k\in\{1,...,q_s\}:\quad
\Big(\hat{T}+\hat{V}[v_s[n]]\Big) |\psi_{s_k}\rangle
=
E_{s_1}|\psi_{s_k}\rangle
\\
&&
n(r)=\sum_{k=1}^{q_s} \lambda_k n_k(r),
\quad
n_k(r)=\langle\psi_{s_k}|\hat{n}(r)|\psi_{s_k}\rangle,
\quad
\lambda_k^*=\lambda_k\ge 0,
\sum_{k=1}^{q_s} \lambda_k=1
,
\nonumber
\end{eqnarray}
where $q_s$ degenerate ground states $|\psi_{s_k}\rangle$ with energy $E_{s_1}$ are possibly considered for the non-interacting system when the computed potential $v_s[n]$ leads to such a situation
(we have $q_s=1$ when $v_s[n]$ does not lead to ground state degeneracy).
%$v_s[n_s]$ is called the KS potential.
}
The $\{|\psi_{s_k}\rangle\}$ can be chosen to be orthonormal eigenstates, for instance Slater determinants in the case of a fermionic system~\cite[]{Dre90,Engel2011}.
When the non-interacting system is degenerate,
at least one of the many minimizing densities $\sum_{k=1}^{q_s} \lambda_k \langle\psi_{s_i}|\hat{n}(r)|\psi_{s_i}\rangle$ agrees with the interacting system ground state density, which fixes the $\lambda_k$.
The equivalence class of these degenerate densities is thus identified to this element.

\JMcomm{
Now, what about an interacting system ground state density $n\in\mathscr{B}\backslash\mathscr{B}_s$?
We would have the following theorem (again conditioned by the existence of some form of \textbf{Theorem \ref{th:Lieb_diff}}):
%\cite{vanLeeuwen2003} and \cite{Engel2011} detail in a comprehensive way the steps leading to 
%
\begin{theorem}[False theorem by \cite{Englisch1984,Englisch1984b}]
\label{th:Lieb_dense}
\begin{eqnarray}
\label{eq:epsilon_1}
&&
\forall n\in\mathscr{B}, \forall \epsilon>0,
\\
&&\hspace{1.5cm}
\exists n_\epsilon\in\mathscr{B}_s:
%\nonumber\\
%&&
%\hspace{3cm}
\quad
||n_\epsilon-n||_{1,3}\le\epsilon
.
%\quad(p=1,3)
\nonumber\\
&&
\text{If a derivative of $T_L$ exists, then }
n_\epsilon
\text{ can represent a ground state density of the }
\nonumber\\
&&
\text{non-interacting system with potential }
v_s[n_\epsilon]=-\frac{\partial T_L[n_\epsilon]}{\partial n_\epsilon(r)}
%v_s[n_\epsilon]=-\partial T_L[n_\epsilon]
.
\nonumber
\end{eqnarray}
\end{theorem}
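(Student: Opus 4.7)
The plan is to split the statement into two largely independent tasks: (i) show that $\mathscr{B}_s$ is norm-dense in $\mathscr{S}$ (hence in $\mathscr{B}\subset\mathscr{S}$) with respect to the $||.||_{1,3}$ norm, and (ii) read off the identification of $v_s[n_\epsilon]$ from the conjectured derivative formula eq.~(\ref{eq:Lieb_diff_s}). Task (ii) is almost tautological: the membership $n_\epsilon\in\mathscr{B}_s$ already means that $n_\epsilon$ is a ground state density of some non-interacting Hamiltonian $\hat{H}_s$, and eq.~(\ref{eq:Lieb_diff_s}) simply provides the explicit recipe $v_s[n_\epsilon](r)=-\partial T_L[n_\epsilon]/\partial n_\epsilon(r)$ for recovering the generating potential. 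So the real content sits in task (i).

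For task (i) I would invoke the Br\o ndsted-Rockafellar theorem from convex analysis: for any proper, convex, lower semi-continuous functional on a Banach space, the set of points where the subdifferential is non-empty is norm-dense in the effective domain. To apply it, I would first replicate for $T_L$ the proof of Theorem~\ref{th:Lieb_csc} given by \cite{Lieb1983}; the argument carries over verbatim once the $\hat{W}$ term is removed, establishing that $T_L$ is proper, convex, and lower semi-continuous on $\mathscr{X}$ with effective domain $\mathscr{S}$. Granting the conjectured analog of Theorem~\ref{th:Lieb_diff} for $T_L$, every $n_s\in\mathscr{B}_s$ carries a G\^ateaux derivative and therefore a non-empty subdifferential $\{-v_s[n_s]\}$; so $\mathscr{B}_s$ is contained in the domain of subdifferentiability of $T_L$, which Br\o ndsted-Rockafellar tells us is dense in $\mathscr{S}$. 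Given $n\in\mathscr{B}$ and $\epsilon>0$, pulling an element of that dense subset into the ball of radius $\epsilon$ around $n$ then supplies the required $n_\epsilon$.

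The hard part, and the place where I expect this plan to break, is the reverse inclusion: Br\o ndsted-Rockafellar gives density of the \emph{subdifferentiable} set, which in general is strictly larger than the set $\mathscr{B}_s$ of G\^ateaux differentiable points. Concluding density of $\mathscr{B}_s$ itself would require that a non-empty subdifferential already forces G\^ateaux differentiability, a property that is typically false for convex lower semi-continuous functionals on infinite-dimensional Banach spaces. This is precisely the loophole that \cite{Lammert2007} exposes in the original \cite{Englisch1984,Englisch1984b} argument, and the reason why the statement is acknowledged in these notes to be false as written.

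A more constructive alternative would be to start directly from $n\in\mathscr{B}$ and regularize explicitly: for instance one could follow the Moreau-Yosida strategy of \cite{Kvaal2014}, solve the regularized non-interacting problem in a finite volume to produce a density in $\mathscr{B}_s$ close to $n$ in $||.||_{1,3}$, and then attempt to remove the regularization and the volume cutoff. I would expect this route to succeed at producing close approximants but, as \cite{Kvaal2014} already note, to fall short of the unrestricted infinite-volume setting in which Theorem~\ref{th:Lieb_dense} is formulated, so it does not repair the underlying $v$-representability question either.
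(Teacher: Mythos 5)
The paper does not prove this statement: it is transcribed from \cite{Englisch1984,Englisch1984b} as a theorem to be read with skepticism, explicitly labeled false and colored in red, and the surrounding text immediately falls back on a conjectured weaker-sense derivative. There is therefore no in-paper argument to match your reconstruction against; what one can assess is whether you correctly locate the classical route and where it breaks.

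Your instinct --- Br{\o}ndsted-Rockafellar applied to $T_L$, with properness, convexity and lower semicontinuity imported from Lieb's proof of Theorem~\ref{th:Lieb_csc} after dropping $\hat{W}$ --- is exactly the classical density argument (Lieb's Theorem~3.13 for $F_L$, with the analogous statement for $T_L$). But you misplace the gap. You argue that Br{\o}ndsted-Rockafellar only establishes density of the \emph{subdifferentiable} set, that this may strictly contain $\mathscr{B}_s$, and you reach even the inclusion $\mathscr{B}_s\subseteq\{\partial T_L\ne\varnothing\}$ only by invoking the conjectured G\^ateaux derivative. Both moves are unnecessary: with $\mathscr{B}_s$ defined as in eq.~(\ref{eq:B_s}) as the set of non-interacting ensemble $v$-representable densities, Lieb's variational characterization (Theorem~3.10 of \cite{Lieb1983}) shows that $n_s$ is ensemble $v$-representable \emph{if and only if} $\partial T_L(n_s)\ne\varnothing$, so $\mathscr{B}_s$ simply \emph{is} the subdifferentiable set. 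The inclusion holds without any differentiability hypothesis, your worried ``reverse inclusion'' is an equality, and the density clause $||n_\epsilon-n||_{1,3}\le\epsilon$ then follows unconditionally from Br{\o}ndsted-Rockafellar. What Lammert's result actually invalidates --- and what the red flag in the paper is aimed at --- is the final clause, the identification $v_s[n_\epsilon]=-\partial T_L[n_\epsilon]/\partial n_\epsilon(r)$ read as a G\^ateaux derivative; since $T_L$ is nowhere G\^ateaux differentiable on $\mathscr{X}$, that clause inherits the falsity of Theorem~\ref{th:Lieb_diff} and survives only under the restricted-direction derivative the paper subsequently conjectures. Your Moreau-Yosida aside is a fair description of the \cite{Kvaal2014} repair and its finite-volume limitation, but it is an alternative to, not a rescue of, the Englisch-Englisch route.
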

\noindent
Considering the functional form of $v_s[n]$ as known for $n\in\mathscr{B}\cap\mathscr{B}_s$,
we could self-consistently resolve eq.~(\ref{eq:KS}) to recover the ground state density of the corresponding interacting system.
If $\mathscr{B}=\mathscr{B}_s$, 
i.e. $\mathscr{B}\cap\mathscr{B}_s=\mathscr{B}$,
then a non-interacting system would theoretically always exist to reproduce exactly the ground state density of any interacting system~\cite[]{Par89,Dre90,Engel2011}.
}

\subsection{{Kohn-Sham scheme}}
\label{sec:KS}

A first step is usually done by decomposing $E_v[n]$ as~\cite[]{Koh65,Par89,Dre90,Engel2011}
\begin{eqnarray}
\label{eq:Lieb_en_prac}
\forall n\in\mathscr{S}:
\quad
&&
E_v[n] = T_L[n]+E_H[n] + E_{XC}[n] + \int_{\mathbb{R}^3}v(r)n(r)dr
,
\end{eqnarray}
where $E_H[n]$ is the Hartree energy and $E_{XC}[n]$ is the exchange-correlation energy defined by
\begin{eqnarray}
\forall n\in\mathscr{S}:
\quad
&&
E_{XC}[n]=F_L[n]-T_L[n]-E_H[n]
\nonumber
.
\end{eqnarray}
This rewriting of $E_v[n]$ makes sense as the unknown and ``complicated'' term, $E_{XC}[n]$, often represents a ``correction'', i.e. is quite small compared to the other terms in eq.~(\ref{eq:Lieb_en_prac}), which is pratically advantageous in the approximate parameterizations quest~\cite[]{Par89,Dre90,Engel2011}.
%An example of such a parameterization given by the LDA,
%widely used to describe many-electron systems in a molecule.
Note however that the functional form of $E_{XC}[n]$ is not given by DFT theorems, DFT only stating existence theorems.
Approximate parameterizations are thus used in practice, like the LDA one~\cite[]{Par89,Dre90,Engel2011}.

\JMcomm{
Then, the KS potential can be splitted into
\begin{eqnarray}
\label{eq:v_s_prac}
\forall n\in\mathscr{B}\cap\mathscr{B}_s:
\quad
v_s[n]
=
v_H[n]
+v_{XC}[n]
+v
,
\end{eqnarray}
where $v_H[n]=\partial E_{H}[n]/\partial n(r)$ is the Hartree potential, $v$ the external potential (the same than the one of the interacting system) and
$v_{XC}[n]$ the ``exchange-correlation'' potential defined by
%
%Now, we come to the hypothesis done by KS.
%Supposing that derivatives of $F_L[n]$ and $T_L[n]$ exist in some sense, not only subderivatives,
%and using eqs.~(\ref{eq:Lieb_diff}), (\ref{eq:Lieb_diff_s}) and (\ref{eq:Lieb_en_prac}), we obtain
%
\begin{eqnarray}
\label{eq:v_xc_dif}
%\forall n\in\mathscr{B}\cap\mathscr{B}_s:
%\quad
v_{XC}[n](r)=\frac{\partial E_{XC}[n]}{\partial n(r)}
.
\end{eqnarray}
}
%\cite{Koh65} proposed to relate this functional form to the interacting energy functional form, $E_v[n]$ in eq.~(\ref{eq:Lieb_en}), for practical reasons.

\section{Existence of a derivative of the Lieb functional?}
\label{sec:beyonf_subdif}

\begin{definition}[Directional derivative and derivative]
\label{def:diff}
We consider a Banach space denoted by $B$ and a functional $G[n]:B\rightarrow \mathbb{R}$.
The directional derivative evaluated at some $n\in B$ in the direction $h\in B$ is defined by
\begin{eqnarray}
\label{eq:diff_def_2}
G'[n,h]
=
\lim_{\epsilon\rightarrow 0^+}\frac{G[n+\epsilon h]-G[n]}{\epsilon}
,
\end{eqnarray}
when the limit exists, $\infty$  being accepted.
For a G\^ateaux derivative $\partial G[n]/\partial n(r)$  to exist, $G'[n,h]$ must be finite and linear in $h$, for all $h$ in $B$. This implies
%\footnote[1]{
%In the general case, $\varphi[n]$ can possibly be a linear operator acting on $h$, like a gradient... This is not the case here.
%}
%
\begin{eqnarray}
\label{eq:diff_def}
%\forall h\in B(\mathbb{R}^3),
\exists \varphi[n]\in  B^*:
\quad
G'[n,h]=\int_{\mathbb{R}^3} \varphi[n](r) h(r)
\quad\Rightarrow\quad
\frac{\partial G[n]}{\partial n(r)}=\varphi[n](r)
.
\end{eqnarray}
%
%$\varphi[n](r)$ thus would define the (functional) derivative.
If these properties do not hold for all $h$ in $B$ but only in some clearly defined and pertinent subset of $B$, a weaker derivative notion can be obtained.
\end{definition}

%$F_L[n]$ being convex, its directional derivative $F_L'[n,h]$ exists
%for $n$ and $h$ in the Banach space $\mathscr{X}$ (again, $\infty$  being accepted).

A main foundation of the KS scheme had for long been \textbf{Theorem \ref{th:Lieb_diff}}.
However, \cite{Lammert2007} pointed out that the Lieb functional is in fact nowhere G\^ateaux differentiable,
invalidating this Theorem.
Nevertheless, would $F_L[n]$ admit some weaker notion of a derivative in the subset $\mathscr{B}\subset\mathscr{S}$?
This would be sufficient for most of previous KS scheme considerations to hold.

So, let us consider $n\in\mathscr{B}$ in the following.
A necessary condition for a derivative to exist is $F_L'[n,h]$ to be finite and linear in $h$.
Because these properties cannot hold $\forall h\in\mathscr{X}$
among other because of eq.~(\ref{eq:Lieb2}),
$F_L[n]$ cannot admit a G\^ateaux derivative \cite[]{Lammert2007}.
However, a weaker notion of a derivative may still be defined if $F_L'[n,h]$ is finite and linear in $h$ for all $h$ in some clearly defined and (physically) pertinent subset of $\mathscr{X}$.
The ``perturbation directions'' $h$ must satisfy some regularity and particules number conservation constraints in order to be admissible;
indeed, $F_L'[n,h]$ can possibly be finite only for $n+\epsilon h\in\mathscr{S}$, remind eq.~(\ref{eq:Lieb2}).
%So, can we define a set for $h$ such that $(n+\epsilon h)\in\mathscr{S}$ for infinitesimal and positive $\epsilon$?
Introducing the Banach space\footnote[1]{
$\delta\mathscr{S}$ has a vector space structure
unlike $\mathscr{S}$, eq.~(\ref{eq:S}), as linear combinations of elements in $\delta\mathscr{S}$ still integrate to $0$.
}
\begin{eqnarray}
\label{eq:S_0}
\delta\mathscr{S}
=
\Big\{
\delta n \Big|
\int_{\mathbb{R}^3}\delta n(r)dr=0,\sqrt{|\delta n|}\in \mathscr{H}^1(\mathbb{R}^3)
\Big\}
\quad
\text{ with norm $||.||_{1,3}$}
,
\end{eqnarray}
we may think to constrain $h$ to belong to $\delta\mathscr{S}$.
However, even if sufficient to conserve the particle number, this space
would not constrain $n+\epsilon h\ge 0$ even for infinitesimal $\epsilon$ on (non-null measure) sets where $n$ equals to zero.
$h\in\delta\mathscr{S}$ is thus not constraining enough to ensure $n+\epsilon h\in\mathscr{S}$ even in the limit $\epsilon\rightarrow 0^+$, unless if $n$ was restricted to some subset of $\mathscr{B}$ of strictly positive densities;
but this would be too restrictive physically.
To overcome that, a possibility could be to pose $h=n_1-n$ and search for some clearly defined and pertinent set for $n_1$ \cite[]{Lammert2007}, instead of searching directly a set for $h$.
A natural set for $n_1$ would be a subset of $\mathscr{S}$. Indeed, as $\mathscr{S}$ is convex,
$n+\epsilon h=n+\epsilon (n_1-n)=\epsilon n_1+(1-\epsilon)n\in\mathscr{S}$ for any $\epsilon\in[0,1]$, thus for $\epsilon\rightarrow 0^+$\footnote{
$h=n_1-n$ should belong to a subset of $\delta\mathscr{S}$ for all $n\in\mathscr{B}\subset\mathscr{S}$ and $n_1\in\mathscr{S}$.
}.
%
%Thus, for $F_L'[n,n_1-n]$ to possibly exist and be finite for $n\in\mathscr{B}$,
%$n_1$ must belong to a subset of $\mathscr{S}$
%(but how ``large'' would this subset be is a question).

Then, for a derivative of $F_L[n]$ to possibly exist in a restricted sense,
a linearity property of $F_L'[n,n_1-n]$ with respect to $n_1-n$ must be demonstrated to hold for $n$ and $n_1$ in some clearly defined and pertinent subsets of $\mathscr{B}$ and $\mathscr{S}$, respectively.
\cite{Lammert2007} demonstrated such a property
for quite stongly constrained subsets (regularity and strict positivity constraints), that are too restrictive physically.
A more general demonstration that at least relaxes all constraints on $\mathscr{B}$ would be 
necessary for most of previous KS scheme considerations to hold.
This still represents an open question.

\JMcomm{
In the following, we conjecture that a derivative of $F_L[n]$ can be defined in some restricted (or weak) sense for all $n\in\mathscr{B}$, that allows to consider that previous KS scheme considerations hold and to discuss the non-interacting $v$-representability question.
Keep however in mind that the following relies on a conjecture that is still to be fully demonstrated.
}

\section{On non-interacting $v$-representability}

\subsection{{Non-interacting $v$-representability conjecture}}
\label{sec:Open}

As a consequence of conjecturing that a derivative of $F_L[n]$ can be defined in some restricted (or weak) sense for all $n\in\mathscr{B}$,
we consider that \textbf{Theorems \ref{th:Lieb_dense}} and \textbf{\ref{th:Lieb_dense}} are true when considering the weaker derivative form
and thus that a KS potential can be defined.
From eq.~(\ref{eq:epsilon_1}), we we can consider that the set $\mathscr{B}_s$, eq.~(\ref{eq:B_s}), is dense in the set $\mathscr{B}$, eq.~(\ref{eq:B}) (and conversely), and thus
\begin{eqnarray}
\label{eq:dens_lim}
\lim_{\epsilon\rightarrow 0} n_\epsilon=n
.
\end{eqnarray}
Two possibilities occur regarding this limit:
\begin{itemize}
\item
If $\mathscr{B}\ne\mathscr{B}_s$, the limit in eq.~(\ref{eq:dens_lim}) cannot be reached by $n_\epsilon$ when $n\in\mathscr{B}\backslash\mathscr{B}_s$.
%by $n_\epsilon\in\mathscr{B}_s$, and conversely.
A non-interacting system that reproduces $n$ with an arbitrary precision can still theoretically be defined,
but this does not mean there exists a practically friendly way to achieve it.
Indeed, $v_s[n_\epsilon]$ could then possibly change very rapidly (oscillate or even diverge)
to make $n_\epsilon$ slightly closer to $n$.
\item
If $\mathscr{B}=\mathscr{B}_s$, the limit is always reached, and conversely.
\end{itemize}
There is (even conjecturing that a KS potential can rigorously be defined from the Lieb functional) no proof that the sets $\mathscr{B}$ and $\mathscr{B}_s$ are equal.
Considering $\mathscr{B}=\mathscr{B}_s$, important for the complete foundation of the Lieb functional-based KS scheme, thus represents a further conjecture called ``non-interacting $v$-representability conjecture''.

%Thus, even conjecturing that a KS potential $v_s[n_\epsilon]$ can be defined, 
%%
%another conjecture is:
%%
%\begin{proposition}
%[Non-interacting $v$-representability conjecture]
%\label{co:conject}
%$\mathscr{B}=\mathscr{B}_s$.
%\end{proposition}

\subsection{{Necessary condition}}
%\subsection{Non-interacting $v$-representability conjecture}
\label{sec:Non-int-conject}

%The following mostly consists in a continuation of the work of section 15 of \cite{vanLeeuwen2003}.
%An up to now open question is the validity of the following proposition:
%The non-interacting $v$-representability conjecture is that ~\cite[]{vanLeeuwen2003}. 
%\noindent
%Proving the proposition \ref{co:conject} is true would have interesting consequences,
%as discussed in \S\ref{sec:Non-int} and \ref{sec:KS}.
%Eqs. (\ref{eq:Lieb_diff_s}) or (\ref{eq:v_s_prac})
%would be defined $\forall n\in\mathscr{B}$, i.e. $\mathscr{B}\cap\mathscr{B}_s=\mathscr{B}$,
%and thus any interacting system ground state density $n$ would theoretically be exactly reproductible by a non-interacting system
%with a potential $v_s[n]$ or $v_{XC}[n]$.
%probably more practical than working with eq.~(\ref{eq:epsilon_1}).
%Indeed, in eq.~(\ref{eq:epsilon_1}) $n_\epsilon$ and $n$ do not belong to the same set,
%whereas in eq.~(\ref{eq:epsilon_2}) $v_s[n_\epsilon]$ and $v_{n}$ belong to the same space.

%We here consider that a $v_s[n_\epsilon]$ can be defined at least for a subdifferential point of view, even if this still represents an open question.
%
A practical way to demonstrate the non-interacting $v$-representability conjecture could be:

\begin{remark}
[Non-interacting $v$-representability condition]
\label{th:non-int}
%The non-interacting $v$-representability conjecture, 
The non-interacting $v$-representability conjecture is true if
% (for all $r\in\mathbb{R}^3$):
\begin{eqnarray}
\label{eq:epsilon_2}
&&
\forall n\in\mathscr{B}, \exists v_{n}\in \mathscr{V},
%\forall r\in\mathbb{R}^3,
\\
&&\hspace{1.5cm}
\forall \epsilon>0,\exists \alpha_n^{(\epsilon)}\text{ satisfying } \lim_{\varepsilon\rightarrow 0} \alpha_n^{(\varepsilon)}=0,
\nonumber\\
&&\hspace{3cm}
\forall n_\epsilon\in\mathscr{B}_{s}^{(\epsilon,n)}:
\quad
\big|v_s[n_\epsilon](r)-v_{n}(r)\big|
\le \alpha_n^{(\epsilon)}(r)
,
\nonumber
\end{eqnarray}
where we considered the following set (defined by given $n\in\mathscr{B}$ and $\epsilon$ value)
\begin{eqnarray}
\label{eq:Bs_eps_n}
\mathscr{B}_{s}^{(\epsilon,n)}
=
\Big\{
n_\epsilon\in\mathscr{B}_s: ||n_\epsilon-n||_{1,3}\le\epsilon
\Big\}
%\quad\subset\mathscr{B}_s
.
\end{eqnarray}
%
%where the norm $||.||$ can be $||.||_\mathscr{V}$, $||.||_\frac{3}{2}$ or $||.||_\infty$,
%remind eq.~(\ref{eq:V});
$\mathscr{B}_{s}^{(\epsilon,n)}$ represents the subset of densities in $\mathscr{B}_{s}$
%(i.e. of ground state density densities of the non-interacting system with potential $v_s[n_\epsilon]$)
that are ``$\epsilon$-close'' to a given density $n\in\mathscr{B}$ from the $||.||_{1,3}$ norm point of view\footnote[1]{
Using this set, eq. (\ref{eq:epsilon_1}) can be reformulated:
$
\forall n\in\mathscr{B}, \forall \epsilon>0:
\mathscr{B}_{s}^{(\epsilon,n)}\ne\varnothing
$.
}.
Eq.~(\ref{eq:epsilon_2}) would imply
\begin{eqnarray}
\label{eq:lin_vn}
\lim_{\epsilon\rightarrow 0} v_s[n_\epsilon]=v_{n}
.
\end{eqnarray}
\end{remark}

\begin{proof}[\textbf{Justification}]
If the non-interacting $v$-representability conjecture is true, then any $n\in\mathscr{B}$ would be exactly reproductible by a non-interacting system
with a potential $v_s[n]$.
This is equivalent to say, using the notations in Theorem \ref{th:Lieb_dense}, that: (A) $v_s[n_\epsilon]$ should converge for $\epsilon\rightarrow 0$ towards an element $v_n\in\mathscr{V}$ in some smooth way
(i.e. that $v_s[n_\epsilon]$ should not change rapidly, oscillate... approaching the limit)~(\cite{vanLeeuwen2003}, section 15)
and (B) $v_s[n_\epsilon]$ should reach this $v_n$ (and not only tend to it without reaching it).
Satisfying eq.~(\ref{eq:epsilon_2}) would explicitly ensure point (A) and implicitly ensure point (B).
Indeed, as $v_s[n_\epsilon]$ and $v_{n}$ belong to the same space $\mathscr{V}$,
it would always be possible to infinitesimally ``tweak'' a $v_{n}$ that satisfies eq.~(\ref{eq:epsilon_2})
so that the limit is reached in eq.~(\ref{eq:lin_vn}).
Thus, proving eq.~(\ref{eq:epsilon_2}) would not only imply that 
$v_s[n_\epsilon]$ would converge towards some element in $\mathscr{V}$ in a smooth way,
but also that it is always possible to select this element so that the limit is reached.
Equivalently, the density domain limit in eq. (\ref{eq:dens_lim}) would always be reached,
implying through eq.~(\ref{eq:epsilon_1}) that $\mathscr{B}=\mathscr{B}_s$.
\end{proof}

\subsection{{Relationship with the existence of a bounded derivative for $v_s[n_s]$}}
\label{sec:Non-int-conject}

Demonstrating the existence of a bounded derivative of $v_s[n_s]$ for all $n_s\in\mathscr{B}_s$ would be sufficient to demonstrate that eq.~(\ref{eq:epsilon_2}) is true.
This makes sense physically: if in eq.~(\ref{eq:epsilon_2}) $v_s[n_\epsilon]$ varies sufficiently smoothly with respect to $n_\epsilon\in\mathscr{B}_s$ when $n_\epsilon$ approaches some limit $n\in\mathscr{B}$,
it should be possible at some point to continuously extend $v_s[n_\epsilon]$ at $n\in\mathscr{B}$. 
This should lead to the definition of an element $v_n\in\mathscr{V}$ towards which $v_s[n_\epsilon]$ converges smoothly for a $n_\epsilon$ sufficiently close to $n$ from the $||.||_{1,3}$ norm point of view.

\begin{theorem}[Non-interacting $v$-representability and derivative of the KS potential]
\label{th:1}
If a derivative of $v_s[n_s]$ exists in a restricted sense for $n_s\in\mathscr{B}_s$
and if this derivative is bounded,
then the non-interacting $v$-representability conjecture is true.
\end{theorem}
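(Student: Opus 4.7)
The plan is to reduce the claim to verifying eq.~(\ref{eq:epsilon_2}) via the sufficient condition given in Remark~\ref{th:non-int}. Hence the whole task becomes, for each $n\in\mathscr{B}$, to produce a candidate $v_n\in\mathscr{V}$ together with a function $\alpha_n^{(\epsilon)}$ that vanishes as $\epsilon\to 0$ and that dominates the spread of $v_s[n_\epsilon]$ over the set $\mathscr{B}_s^{(\epsilon,n)}$.

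First I would fix $n\in\mathscr{B}$ and pick any two $n_1,n_2\in\mathscr{B}_s^{(\epsilon,n)}$; by the triangle inequality $||n_1-n_2||_{1,3}\le 2\epsilon$. I would connect $n_1$ to $n_2$ by a continuous path $\{n_t\}_{t\in[0,1]}$ staying inside $\mathscr{B}_s$, the natural candidate being the affine segment $n_t=(1-t)n_1+t\,n_2$, which automatically lies in the convex set $\mathscr{S}$ of eq.~(\ref{eq:S}). Using the hypothesis that a derivative $\partial v_s/\partial n_s$ exists in a restricted sense along such paths and that it is bounded by some $M(r)<\infty$, I would apply a fundamental-theorem-of-calculus-type argument to obtain the pointwise Lipschitz estimate $|v_s[n_1](r)-v_s[n_2](r)|\le M(r)\,||n_1-n_2||_{1,3}\le 2\,M(r)\,\epsilon$.

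With such an estimate in hand, the family $\{v_s[n_\epsilon]:n_\epsilon\in\mathscr{B}_s^{(\epsilon,n)}\}$ would have pointwise diameter at most $2M(r)\epsilon$ and would therefore be Cauchy as $\epsilon\to 0$. Completeness of $\mathscr{V}$ (together with a pointwise Cauchy plus dominated-convergence identification of the limit) would produce a unique $v_n\in\mathscr{V}$ such that $|v_s[n_\epsilon](r)-v_n(r)|\le 2M(r)\,\epsilon$ for every $n_\epsilon\in\mathscr{B}_s^{(\epsilon,n)}$. Setting $\alpha_n^{(\epsilon)}(r)=2M(r)\,\epsilon$ delivers the inequality and the vanishing property required in eq.~(\ref{eq:epsilon_2}); Remark~\ref{th:non-int} then yields the non-interacting $v$-representability conjecture.

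The hard part will be the passage from the bounded-derivative hypothesis to the pointwise Lipschitz estimate. As emphasized in Section~\ref{sec:beyonf_subdif}, a restricted-sense derivative of $v_s[n_s]$ need only be defined along a limited class of admissible perturbation directions; one therefore has to verify that every tangent $\dot n_t=n_2-n_1$ of the path is admissible, that the restricted derivative is regular enough along the path to support the fundamental-theorem-of-calculus step, and that ``bounded'' is meant in the pointwise, dual-to-$||.||_{1,3}$ sense compatible with $\mathscr{V}$. A more delicate issue is that the affine segment $n_t$ need not itself belong to $\mathscr{B}_s$ (ensemble $v$-representability of intermediate densities is of the same flavour as the conjecture we are trying to prove), so one may have to reroute the path through $n$ or through a dense subset of $\mathscr{B}_s$, or settle for a local-Lipschitz-along-admissible-directions version of the argument. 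This near-circularity is the real subtlety of the statement.
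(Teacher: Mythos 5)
Your proposal follows essentially the same route as the paper: use the density of $\mathscr{B}_s$ in $\mathscr{B}$, pass from the bounded-derivative hypothesis to a uniform (Lipschitz) continuity estimate, and then invoke the standard unique-continuous-extension theorem to obtain $v_n$ and verify eq.~(\ref{eq:lin_vn}) (equivalently eq.~(\ref{eq:epsilon_2})), after which Remark~\ref{th:non-int} closes the argument. The paper's own proof is in fact terser than yours---it simply cites ``mathematical analysis theorems'' for the extension step---and the difficulty you correctly flag in your final paragraph (the affine segment joining two densities in $\mathscr{B}_s$ need not remain in $\mathscr{B}_s$, so the mean-value step converting a bounded derivative into a Lipschitz estimate is not directly justified) is left unaddressed there as well.
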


\begin{proof}[\textbf{Proof}]
%\small
\textbf{Theorem \ref{th:Lieb_dense}} ensures that $\mathscr{B}_s$ is dense in $\mathscr{B}$.
We moreover do the hypothesis a derivative of $v_s[n_s]$ exists in a restricted sense for all $n_s\in\mathscr{B}_s$
and that this derivative is bounded.
Mathematical analysis theorems \cite[]{Rudin1976,Rockafellar1970} then ensure the existence of a unique continuous extension of $v_s[{n}_s]$ at $n\in \mathscr{B}$, thus the existence of the limit in eq.~(\ref{eq:lin_vn}), which is sufficient to conclude.
\end{proof}

Of course, the hypothesis that a derivative of $v_s[n_s]$ exists in a restricted sense for $n_s\in\mathscr{B}_s$
is a very strong one, stronger than the conjecture that a derivative of $F_L[n]$ exists in a restricted sense for $n\in\mathscr{B}$.
In the following, we give some static linear response-based elements
that hopefully contribute to the reflection on the existence of some derivative of $v_s[n_s]$.
We underline these elements are not sufficiently rigorous from a mathematician's point of view.
However, remind this document must be considered as notes (not an article) whose aim is to gather some possible tracks.

\subsection{{Static linear response in the non-degenerate case}}
\label{sec:diff_2}

We consider a potential $v_s\in \mathscr{V}^{nondeg}\subset\mathscr{V}$
that leads to non-degenerate solutions for the non-interacting system,
i.e. to $n_s[v_s]\in\mathscr{A}_s$.
Note that sufficiently small perturbations $\delta w\in\mathscr{V}$ of the corresponding potential should also lead to non-degenerate solutions,
i.e. to $n_s[v_s+\delta w]\in\mathscr{A}_s$~\cite[]{vanLeeuwen2003}.

%We adapt the considerations of section 7 of \cite{vanLeeuwen2003}
%to the non-interacting systems and approach the question of differentiability from a slightly different point of view.
%
The static linear response applied to the non-interacting non-degenerate case gives the following relationship between a perturbation $\delta w$ that represents a first-order change in $v_s$ and the corresponding first-order change $\delta m$ in $n_s$:
\begin{eqnarray}
\label{eq:diff_1}
\delta m[v_s,\delta w](r)
&=&
\int_{\mathbb{R}^3} \chi_s[v_s](r',r) \delta w(r') dr'
,
\end{eqnarray}
where $\chi_s[v_s]$ represents the ``static density response function'' defined in \ref{app:proofs_1_1}.
If (as usual) we identify perturbations of the potential defined up to an additive constant,
we can deduce from eq.~(\ref{eq:diff_1}) that $\delta m$ is a unique functional of $v_s$ and $\delta w$ \cite[]{vanLeeuwen2003}.
The obtained $\delta m[v_s,\delta w]$ satisfies $\int_{\mathbb{R}^3}\delta m[v_s,\delta w](r)dr=0$, see \ref{app:proofs_1_1} for details, which is an expected physical property as we do not want the density perturbations to change the particles number $N=\int_{\mathbb{R}^3}n_s(r)dr$.
$\delta m[v_s,\delta w]$ belongs to the set $\delta\mathscr{A}_s^{(n_s)}$ of first-order density changes productible by first-order potential changes in the non-degenerate case.
Note that the set $\delta\mathscr{A}_s^{(n_s)}$ is not well qualified and \textit{a priori} depends on $n_s$\footnote{
We should have $\delta\mathscr{A}_s^{(n_s)}\subset\delta\mathscr{S}$, where $\delta\mathscr{S}$ is
the Banach space that has been introduced in eq.~(\ref{eq:S_0}).
}.

We now would like to write $n_s[v_s+\epsilon\delta w]$
in function of $n_s[v_s]+\delta m[v_s,\epsilon\delta w]$,
where the scaling factor $\epsilon> 0$ is choosen small enough to produce a first-order perturbation $\epsilon\delta w$ of the potential $v_s$.
%and verify if we can recover a form like in eq.~(\ref{eq:diff_0}).
By construction, $n_s[v_s]+\delta m[v_s,\epsilon\delta w]$ approximates $n_s[v_s+\epsilon\delta w]\in \mathscr{A}_s$ to first-order in $\epsilon$.
A first-order only approximation difficulty is that $n_s[v_s]+\delta m[v_s,\epsilon\delta w]\notin \mathscr{A}_s$, i.e. does not in general exactly represent a non-degenerate non-interacting ground state density for the potential $v_s+\epsilon\delta w$.

Following \cite{vanLeeuwen2003} (sections 8 and 9),
we introduce a ``corrective'' higher order term $\Delta m^{(v_s,\epsilon\delta w)}$
that leads to $n_s[v_s+\epsilon\delta w]=n_s[v_s]+\delta m[v_s,\epsilon\delta w] + \Delta m^{(v_s,\epsilon\delta w)}\in \mathscr{A}_s$ and should satisfy $\lim_{\epsilon\rightarrow 0^+}\Delta m^{(v_s,\epsilon\delta w)}/\epsilon=0$~\footnote[1]{
We should have $\Delta m^{(v_s,\epsilon\delta w)}\in\delta\mathscr{S}$.
}.
Using eq.~(\ref{eq:diff_1}), we obtain
\begin{eqnarray}
\label{eq:diff_2}
%\forall \delta w\in \mathscr{V}:
%\quad
\frac{n_s[v_s+\epsilon\delta w](r)-n_s[v_s](r)}{\epsilon}
&=&
\int_{\mathbb{R}^3} \chi_s[v_s](r,r') \delta w(r') dr' + \frac{\Delta m^{(v_s,\epsilon\delta w)}(r)}{\epsilon}
.
\end{eqnarray}
Taking the limit $\epsilon\rightarrow 0^+$ like in eqs.~(\ref{eq:diff_def_2}) and (\ref{eq:diff_def}), we may deduce
\begin{eqnarray}
\label{eq:diff_3}
\forall v_s\in\mathscr{V}^{nondeg}:
\quad
%\varphi[v_s](r',r)
%=
\chi_s[v_s](r',r)
=
\frac{\partial n_s[v_s](r)}{\partial v_s(r')}
.
\end{eqnarray}
%
%This proves that the differential of $n_s[v_s]\in\mathscr{A}_s$ exists for $v_s\in \mathscr{V}^{nondeg}$.
%\end{proof}
%
However, rigorous definitions of the set $\delta\mathscr{A}_s^{(n_s)}$ and of the properties of $\Delta m^{(v_s,\epsilon\delta w)}$ would be necessary to conclude on the existence of such a derivative.

\subsection{{Inverse static linear response in the non-degenerate case}}
\label{sec:diff_3}

\noindent
We now would like to consider the ``inverse'' of eq.~(\ref{eq:diff_1}).
%
%``reverse'' of previous section, i.e. if a derivative of the potential $v_s[n_s]$ exists in some sense for all $n_s\in\mathscr{A}_s$.
%
%%\begin{proof}[\textbf{Proof (inspired by sections 8 and 9 of \cite{vanLeeuwen2003})}]
%We verify if we can recover a form like in eq.~(\ref{eq:diff_def}) for $v_s[n_s]$ when $n_s\in\mathscr{A}_s$,
%i.e. if there exists a $\varphi[n_s](r,r')$ that satisfies for all $\delta m$ in a clearly defined subset of $\delta\mathscr{S}$:
%%for $v_s\in \mathscr{V}^{nondeg}$ in the following relashionship 
%%
%\begin{eqnarray}
%\label{eq:diff_0bis}
%\lim_{\epsilon\rightarrow 0^+}\frac{v_s[n_s+\epsilon\delta m](r)-v_s[n_s](r)}{\epsilon}
%&=&
%\int_{\mathbb{R}^3} \varphi[n_s](r,r') \delta m(r') dr'
%.
%\end{eqnarray}
%%
%Then, $\varphi[n_s](r,r')$ would equal the derivative of $v_s[n_s](r)$ with respect to $n_s(r)$, i.e. $\partial v_s[n_s](r)/\partial n_s(r')$.
%
Following \cite{vanLeeuwen2003} (sections 8 and 9) and identifying perturbations of the potential defined up to an additive constant,
we can consider that eq.~(\ref{eq:diff_1}) is invertible and have
\begin{eqnarray}
\label{eq:diff_3}
\delta w[n_s,\delta m](r)
=
\int_{\mathbb{R}^3} \chi_s^{-1}[v_s[n_s]](r,r') \delta m(r') dr'
,
\end{eqnarray}
where $\chi_s^{-1}[v_s[n_s]]$ is defined through
\begin{eqnarray}
\label{eq:diff_3b}
\int_{\mathbb{R}^3} \chi_s^{-1}[v_s[n_s]](r,r')\chi_s[v_s[n_s]](r',r'')dr'=\delta(r-r'')
.
\end{eqnarray}
%
%The resulting $\delta w[n_s,\epsilon\delta m]$ lives in $\mathscr{V}$.
%Eq.~(\ref{eq:diff_3}) can be defined for $\delta m$ in $\delta\mathscr{S}$.
An important point is that eq.~(\ref{eq:diff_3}) is valid only for perturbations $\delta m$ that belong to the set of admissible first-order density perturbations $\delta\mathscr{A}_s^{(n_s)}$.
The fact the set $\delta\mathscr{A}_s^{(n_s)}$ is not well qualified and \textit{a priori} depends on $n_s$ thus represents a difficulty.

%We would like to further constrain
%%
%\begin{eqnarray}
%\label{eq:dm}
%\delta m=n_1-n_s
%\quad\text{with}\quad
%n_1\in\mathscr{A}_s
%\quad(\text{and }n_s\in\mathscr{A}_s)
%,
%\end{eqnarray}
%%
%\JMcomm{(OK??? leads to $\delta m\in \delta\mathscr{A}$?)}
%%
%which implies that $\delta m$ belongs to a subset of $\delta\mathscr{S}$ and satisfies $n_s+\epsilon\delta m=\epsilon n_1 +(1-n_1)n_s\ge 0$ for $\epsilon\in[0,1]$ and thus for $\epsilon\rightarrow 0^+$, which is physically and mathematically important.
%As $\mathscr{A}_s$ is not a convex set, 
%for $\epsilon\in[0,1]$.

Again, $n_s+\epsilon\delta m$ does not necessarilly belong to $\mathscr{A}_s$ for all $\epsilon\delta m\in\delta\mathscr{A}_s^{(n_s)}$ and $\epsilon> 0$.
We introduce the ``smallest correction'' $\Delta m^{(n_s,\epsilon\delta m)}$ so that
$n_s+\epsilon\delta m+\Delta m^{(n_s,\epsilon\delta m)}\in \mathscr{A}_s$~\footnote[2]{
We should have $\Delta m^{(n_s,\epsilon\delta m)}\in\delta\mathscr{S}$.
}.
This correction term should satisfy $\lim_{\epsilon\rightarrow 0^+}\Delta m^{(n_s,\epsilon\delta m)}) / \epsilon=0$~\footnote[1]{
This would in general not be true if $\delta m\in\delta\mathscr{S}$,
as $\Delta m^{(n_s,\epsilon\delta m)}$ could then have to compensate for possible $n_s(r)+\epsilon\delta m(r)<0$ at some positions $r$, even at the limit $\epsilon\rightarrow 0^+$.
}.
We then should have $v_s\big[n_s+\epsilon\delta m+\Delta m^{(n_s,\epsilon\delta m)}\big]=v_s\big[n_s\big]+\delta w\big[n_s,\epsilon\delta m\big]+ o(\epsilon)$,
where $o(\epsilon)$ corresponds to the Landau notation~\cite[]{Rockafellar1970}.
Using eq.~(\ref{eq:diff_3}), we deduce
\begin{eqnarray}
&&
%\hspace{1.3cm}
\frac{v_s\big[n_s+\epsilon\delta m+\Delta m^{(n_s,\epsilon\delta m)}\big](r)-v_s\big[n_s\big](r)}{\epsilon}
%&=&
%\delta w[n,\delta m](r)
%\\
=
\int_{\mathbb{R}^3} \chi_s^{-1}[v_s[n_s]](r,r') \delta m(r') dr'
+ \frac{o(\epsilon)}{\epsilon}
.
\label{eq:diff_4}
\end{eqnarray}
We obtain a form quite similar to eqs.~(\ref{eq:diff_def_2}) and (\ref{eq:diff_def}).
%at the limit of infinitesimal $\epsilon$,
However, a main difference is that a correction term $\Delta m^{(n_s,\epsilon\delta m)}$
%(i.e. a $o(||\delta m||_{1,3})$ term)
appears inside the first $v_s$ in eq. (\ref{eq:diff_4}).
%, not outside like in eq. (\ref{eq:diff_def}).
%However, this correction term becomes at the limit $\epsilon\rightarrow 0^+$ negligible compared to $n_s+\epsilon\delta m$.
%which establishes a similar form  than in eq.~(\ref{eq:diff_0bis}) for $\epsilon$ sufficiently small.
%
%$n_s$ lives in a set, $\mathscr{A}_s$, that is not a vector space.
%Even if $\mathscr{A}_s$ is provided with a norm, $||.||_{1,3}$, it thus does not represent a Banach space.
%However, we can consider any $n_s\in\mathscr{A}_s$ and
%$n_s+\delta m+\Delta m^{(\delta w[n_s,\delta m])}\in\mathscr{A}_s$ 
%as elements of the Banach space $\mathscr{X}$ provided with the $||.||_{1,3}$ norm,
%defined in \S\ref{sec:Notations},
%as $\mathscr{A}_s$ is a subset of $\mathscr{X}$.
%Also, the $\delta m$ and $\Delta m^{(\delta w[n_s,\delta m])}$ terms live in the Banach space $\delta\mathscr{S}$ provided with the $||.||_{1,3}$ norm.
%Thus, for any element $\delta m$ in the Banach space $\delta\mathscr{S}$,
%$n_s+\delta m+\Delta m^{(\delta w[n_s,\delta m])}$ belongs to another Banach space $\mathscr{S}$ provided with the same norm.
%This is sufficient to consider that eq. (\ref{eq:diff_4}) is compliant with the differential definition from the domains point of view.
Also, rigorous definitions of the set $\delta\mathscr{A}_s^{(n_s)}$ and of the properties of $\Delta m^{(n_s,\epsilon\delta m)}$ would be necessary before concluding $\chi_s^{-1}[v_s[n_s]](r,r')$ would equal a derivative of the KS potential in a restricted sense (for ``perturbation directions" constrained as 
$
h^{(n_s,\epsilon)}=
\delta m
+
\frac{1}{\epsilon}\Delta m^{(n_s,\epsilon\delta m)}
\text{ with }
\epsilon \rightarrow 0^+
\text{ and }
\epsilon\delta m\in\delta\mathscr{A}_s^{(n_s)}
$
\dots).

%Also, again, the set $\delta\mathscr{A}_s^{(n_s)}$ is not well qualified which prevents rigorous mathematical deductions from eq.~(\ref{eq:diff_4}).
%
%So, even if we may be tempted to deduce that the derivative of $v_s[n_s]$ exists for $n_s\in\mathscr{A}_s$, equal to
%%
%\begin{eqnarray}
%\label{eq:diff_6}
%\varphi[n_s](r,r')
%=
%\frac{\partial v_s[n_s](r)}{\partial n_s(r')}
%=
%\chi_s^{-1}[v_s[n_s]](r,r')
%,
%\end{eqnarray}
%
%\begin{eqnarray}
%\label{eq:h_01}
%h^{(n_s,\epsilon)}=
%\delta m
%+
%\frac{1}{\epsilon}\Delta m^{(n_s,\epsilon\delta m)}
%\quad\text{with}\quad
%\epsilon\delta m\in\delta\mathscr{A}_s^{(n_s)}
%\quad\text{and}\quad
%\epsilon > 0
%,
%\end{eqnarray}
%%
%this would clearly not be rigorous.

\subsection{{Inverse static linear response in the possibly degenerate case}}
\label{sec:diff_4}

%We now extend Theorem \ref{th:3} to all $n_s\in\mathscr{B}_s$.
%In the previous section, we have demonstrated the differentiability of $v_s[n_s]\in \mathscr{V}^{nondeg}$ for all $n_s\in\mathscr{A}_s$.
%In this section, we demonstrate generalize this result and demonstrate the differentiability of ${v_s[n_s]}\in \mathscr{V}$ for all $n_s\in\mathscr{B}_s$.
%
We finally consider if previous section's considerations
can be extended to all $n_s\in\mathscr{B}_s$.
%In the degenerate non-interacting system case, i.e. when $n_s\in\mathscr{B}_s\backslash\mathscr{A}_s$,
%a derivative exists in some sense for at least one element of the equivalence class of the degenerate ground state densities;
%the derivative of this element is chosen to represent 
%the ``derivative of the equivalence class''.
%
%\begin{theorem}[Differentiability of ${v_s[n_s]}$ (2)]
%\label{th:2}
%In the degenerate non-interacting system case, i.e. when $n_s\in\mathscr{B}_s\backslash\mathscr{A}_s$,
%there shoud exist a configuration where the density computed from equally represented orthonormal degenerate ground states  reproduces the interacting system density.
%\end{theorem}
%
%\begin{proof}[\textbf{Proof (together with \ref{app:proofs_1_2_2})}]
If an equation like eq.~(\ref{eq:diff_1}) can be established also in the degenerate case, i.e. for $n_s\in\mathscr{B}_s\backslash\mathscr{A}_s$, then the considerations of \S\ref{sec:diff_3} would straightforwardly generalize to the degenerate non-interacting system case.
However, instead of eq.~(\ref{eq:diff_1}),
we have in the degenerate case the following relationship between a sufficiently small perturbation $\delta w\in \mathscr{V}$ that represents a first-order change in $v_s$
and the corresponding first-order change $\delta m$ in $n_s$:
\begin{eqnarray}
&&
\delta m[v_s,\delta w](r)
=
\int_{\mathbb{R}^3} \chi_s[v_s](r',r) \delta w(r') dr'
+
\int_{\mathbb{R}^3}\int_{\mathbb{R}^3} \xi_s[v_s,\epsilon\delta w](r,r'',r') \delta w(r'')\delta w(r') dr'' dr'
\nonumber\\
&&
\chi_s[v_s](r',r)
=
\sum_{k=1}^{q_s}
\lambda_k \times
\chi_{s_k}[v_s](r',r)
\nonumber\\
&&
\xi_s[v_s,\delta w](r'',r,r')
=
\sum_{k=1}^{q_s}
\lambda_k \times
\xi_{s_k}[v_s,\delta w](r'',r,r')
\nonumber\\
&&
\delta m[v_s,\delta w](r)
=
\sum_{k=1}^{q_s} \lambda_k \times \delta m_k[v_s,\delta w](r)
.
\label{eq:diff_degen_1}
\end{eqnarray}
%
%We constrain $\delta m\in\delta\mathscr{B}_s\subset\delta\mathscr{S}$, where 
%the set $\delta\mathscr{B}_s$ contains all the $(n_1-n_s)$ generated by $n_1\in\mathscr{B}_s$ and $n_s\in\mathscr{B}_s$.
All terms that appear in eq.~(\ref{eq:diff_degen_1}) are detailed in \ref{app:proofs_1_2_1}.
The important points are, using similar notations than the ones in eq.~(\ref{eq:KS}), 
that the $\lambda_k$ denote the weights associated to each degenerate ground states $|\psi_{s_k}\rangle$ of $\hat{H}_v$ chosen to be orthonormal eigenstates (for instance Slater determinants in the case of fermions), and that ${q_s}$ denotes the degeneracy. 
Each $\delta m_k[v_s,\delta w]$ represents a first-order change in the $k^{th}$ eigenstate density $n_{s_k}(r)=\langle\psi_{s_k}|\hat{n}(r)|\psi_{s_k}\rangle$, so that
$\delta m[v_s,\epsilon\delta w]$ represents a first-order change in the total density $n(r)=\sum_{k=1}^{q_s} \lambda_k n_k(r)$.

One difficulty with the degenerate case, eq.~(\ref{eq:diff_degen_1}), is that
$\delta m[v_s,\delta w]$ does not depend linearly on the perturbation $\delta w$,
unlike in the non-degenerate case, eq.~(\ref{eq:diff_1}).
%A linear dependence is needed for the differential to exist.
A second difficulty is that $\xi_s[v_s,\delta w]$ depends on $\delta w$ as explained in \ref{app:proofs_1_2_1}.
A third difficulty is that an explicit form for the inverse of eq. (\ref{eq:diff_degen_1}) is missing.
All this prevents to straightforwardly generalize the considerations of \S\ref{sec:diff_2} and \ref{sec:diff_3} to the degenerate case.

To overcome these difficulties, 
a new result is demonstrated in \ref{app:proofs_1_2_2}: there exists a particular choice for the $\lambda_k$
that always nullifies the $\xi_s[v_s,\delta w]$ term in eq.~(\ref{eq:diff_degen_1}).
This particular choice is $\lambda_k=1/q_s$, i.e. considering the degenerate densities $n_s$ in which 
the degenerate orthonormal eigenstate $|\psi_{s_k}\rangle$ are equally represented.
Then, an equation of the form of eq.~(\ref{eq:diff_3}) is recovered and we can generalize the considerations of \S\ref{sec:diff_3} to the degenerate case
(identifying perturbations of the potential defined up to an additive constant).
Note that the resulting $\delta m[v_s,\delta w]$ satisfies $\int_{\mathbb{R}^3}\delta m[v_s,\delta w](r)dr=0$, 
see \ref{app:proofs_1_2_2}, which is physical.
The corresponding $\delta m[v_s,\delta w]$ belongs to the set $\delta\mathscr{B}_s^{(n_s)}$ of first-order density changes productible by first-order potential changes in this specific configuration.

Does this particular $\lambda_k=1/q_s$ choice make sense?
As explained in \S\ref{sec:Non-int}, degenerate non-interacting systems considerations require to introduce the equivalence class of the degenerate non-interacting ground state densities.
As all elements of this class are considered equivalent, it is sufficient to tag one of them as representative of the of the equivalence class (here the element related to $\lambda_k=1/q_s$).

So, considering this equivalence class and following the steps of \S\ref{sec:diff_3},
we can obtain an equation like eq.~(\ref{eq:diff_2}) with $\delta m[v_s,\delta w]\in\delta\mathscr{B}_s^{(n_s)}$, quite similar to eqs.~(\ref{eq:diff_def_2}) and (\ref{eq:diff_def}).
%at the limit of infinitesimal $\epsilon$,
However, again, a main difference is that a correction term $\Delta m^{(n_s,\epsilon\delta m)}$
appears inside the first $v_s$ in eq. (\ref{eq:diff_4}).
Also, rigorous definitions of the set $\delta\mathscr{B}_s^{(n_s)}$ and of the properties of $\Delta m^{(n_s,\epsilon\delta m)}$ would be necessary before concluding $\chi_s^{-1}[v_s[n_s]](r,r')$ would equal a derivative of the KS potential in a restricted sense (for ``perturbation directions" constrained as 
$
h^{(n_s,\epsilon)}=
\delta m
+
\frac{1}{\epsilon}\Delta m^{(n_s,\epsilon\delta m)}
\text{ with }
\epsilon \rightarrow 0^+
\text{ and }
\epsilon\delta m\in\delta\mathscr{B}_s^{(n_s)}
$
\dots).
However, note that if a more rigourous mathematical analysis of the elements exposed here was one day obtained 
(of course together with a proof that a derivative of $F_L[n]$ can be defined in some restricted sense for all $n\in\mathscr{B}$),
the obtained derivative of the KS potential would represent 
the ``derivative of the equivalence class''.

\subsection{{Unboundedness of the inverse kernel}}
\label{sec:diff_bound}

\ref{app:proofs_1_2_3} shows that the kernel $-\chi_s[v_s[n_s]]$ is real and symmetric,
and has real positive eigenvalues $\alpha_j$ such that $\exists \epsilon>0: \alpha_j > \epsilon$.
The kernel $-\chi_s^{-1}[v_s[n_s]]$ has real eigenvalues $1/\alpha_j$
However, as $\alpha_j$ can become arbitrailly close to $0$, $1/\alpha_j$ is not bounded.
This implies that the representation of $-\chi_s^{-1}[v_s[n_s]]$ in the orthonormal eigenvector basis is not bounded, so as its representations in any other basis.
%Thus, $|\chi_s^{-1}[v_s[n_s]]|$ is bounded.
%
So, even if it was possible to rigorously demonstrate that $\chi_s^{-1}[v_s[n_s]]$ represents a derivative of the KS potential in a restricted sense,
the unboundedness of $\chi_s^{-1}[v_s[n_s]]$ would not allow to conclude that the
non-interacting $v$-representability conjecture is true (remind \S\ref{sec:Non-int-conject}).
Additional constraints would certainly have to enter into account.

%The derivative of $v_s[n_s]$ as defined in previous section being equal to $\chi_s^{-1}[v_s[n_s]]$,
%we deduce that $\big|\partial v_s[n_s](r)/\partial n_s(r')\big|=\big|\chi_s^{-1}[v_s[n_s]](r,r')\big|$ is not bounded.
%
%We achieve the static linear response study of $v_s[n_s]$ by considering if the derivative of $v_s[n_s]$ as defined in previous section is bounded:
%%(for all $r$ in $\mathbb{R}^3$):
%%$\forall n_s\in\mathscr{B}_s$, i.e.
%%for all $r$ in $\mathbb{R}^3$
%%
%\begin{eqnarray}
%\label{eq:alpha_2_bis}
%\exists K\in\mathbb{R}^+,\forall n_s\in\mathscr{B}_s:
%\quad
%\Big|\Big|\frac{\partial v_s[n_s](r)}{\partial n_s}\Big|\Big|_{q=\infty}< K
%.
%\end{eqnarray}

\section{Conclusion}
\label{sec:concl}

We considered Lieb functional-based DFT.
We reminded that an important open question is related to the existence of a derivative of $F_L[n]$ in a restricted sense for $n\in\mathscr{B}$.
Conjecturing that such a derivative can be defined in some restricted (or weak) sense,
we further investigated the validity of the non-interacting $v$-representability conjecture.
We explained how the proof of the latter conjecture can be reduced
to proving that the KS potential $v_s[n_s]$ admits a derivative in a restricted sense for all $n_s\in\mathscr{B}_s$
and that this derivative is bounded.
However, proving that such a derivative exists in a restricted sense
should be difficult (additionally to the difficulty of proving that a derivative of the Lieb functional exists in a restricted sense).
We gave some static linear response-based elements
that may contribute to the reflection,
underlining points that are not sufficiently rigorous from a mathematician's point of view.
We finally discussed a subtlety occuring in the degenerate non-interacting system case,
related to the importance of a specific equivalence class for degenerate ground state densities.

\newpage
\clearpage

\appendix
\section*{Appendix: Static linear response for a non-interacting system}
\label{app:proofs_1}

\section{\textbf{Non-degenerate case reminders}}
\label{app:proofs_1_1}

We use the notations of \S\ref{sec:diff_2}, considering a non-degenerate non-interacting system described by the Hamiltonian $\hat{H}_s$, eq.~(\ref{eq:H_s}).
The static linear response of the corresponding Schr\"odinger equation gives the relationship in eq.~(\ref{eq:diff_1}),
as detailed in section 7 of \cite{vanLeeuwen2003}.
The ``static density response function'' is defined by
(``c.c.'' denotes the complex conjugate)
\begin{eqnarray}
\label{app:B_s}
\chi_s[v_s](r',r)
=
-\sum_{i=2}^{\infty}
\frac{\langle\psi_{s}^{(1)}|\hat{n}(r')|\psi_{s}^{(i)}\rangle\langle\psi_{s}^{(i)}|\hat{n}(r)|\psi_{s}^{(1)}\rangle}
{E_{s_i}-E_{s_1}}
+c.c.
.
\end{eqnarray}
$\{|\psi_{s}^{(i)}\rangle, i\ge 1\}$ denotes the set of orthonormal eigenstates of $\hat{H}_s$
(for instance Slater determinants in the case of fermions),
which are obviously functionals of the potential $v_s$ (implicit in the following).
The ground state is $|\psi_{s}^{(1)}\rangle$, which equals $|\psi_{s_1}\rangle$ in the notations of \S\ref{sec:reminders}.
The ground state density is $n(r)=\langle\psi_{s}^{(1)}|\hat{n}(r)|\psi_{s}^{(1)}\rangle$.
$E_{s_i}=\langle\psi_{s}^{(i)}|\hat{H}_s|\psi_{s}^{(i)}\rangle$ denotes the energy associated to the eigenstate $i$, $E_{s_1}$ being the ground state energy.

Because $\langle\psi_{s}^{(k)}|\psi_{s}^{(l)}\rangle=\delta_{kl}$, we deduce from eq.~(\ref{app:B_s}) that
$\int_{\mathbb{R}^3} \chi_s[v_s](r',r) dr=0$.
Thus, the $\delta m[v_s,\delta w](r)$ term in eq.~(\ref{eq:diff_1})
satisfies $\int_{\mathbb{R}^3}\delta m[v_s,\delta w](r)dr=0$.
%which is an expected physical property
%(any perturbation of the density $n_s$ should not perturb the number $N$ of particles).

\section{Degenerate case}
\label{app:proofs_1_2}

\subsection{{Reminders}}
\label{app:proofs_1_2_1}

We consider a degenerate non-interacting system described by the Hamiltonian $\hat{H}_s$, eq.~(\ref{eq:H_s}).
$\{|\psi_{s}^{(i)}\rangle, i\ge 1\}$ still denotes the set of orthonormal eigenstates of $\hat{H}_s$
(for instance Slater determinants in the case of fermions).
%which are functionals of the potential $v_s$ (implicit).
However, the first eigenstates are now degenerate for $i=1,...,q_s$.
Using the letter $k$ to index these eigenstates (for coherency with the notations of \S\ref{sec:Non-int}),
we have $\forall k\in\{1,...,q_s\}:\langle\psi_{s}^{(k)}|\hat{H}_s|\psi_{s}^{(k)}\rangle=E_{s_1}$.
The density associated to the eigenstate $k$ is $n_{s_k}(r)=\langle\psi_{s}^{(k)}|\hat{n}(r)|\psi_{s}^{(k)}\rangle$.
As explained in \S\ref{sec:Non-int},
there is in the degenerate case a one-to-one correspondence between $v_s$ and the class of ground state densities
\begin{eqnarray}
\label{app:class_densit}
\Big\{
n_s
\Big|
n_s=\sum_{k=1}^{q_s} \lambda_k \times n_{s_k}(r),\lambda_k^*=\lambda_k\ge 0,\sum_{k=1}^{q_s} \lambda_k=1
\Big\}
.
\end{eqnarray}
As detailed in section 11 of \cite{vanLeeuwen2003},
the perturbation theory applied to the corresponding Schr\"odinger equation gives for $k\in\{1,...,{q_s}\}$:
\begin{eqnarray}
%&&
%\forall k\in\{1,...,{q_s}\}:
%\\
&&
\delta m_k[v_s,\delta w](r)
=
\int_{\mathbb{R}^3} \chi_{s_k}[v_s](r',r) \delta w(r') dr'
+
\int_{\mathbb{R}^3}\int_{\mathbb{R}^3} \xi_{s_k}[v_s,\delta w](r,r'',r') \delta w(r'')\delta w(r') dr'' dr'
,
\nonumber\\
\label{app:degen_1}
\end{eqnarray}
where
\begin{eqnarray}
\label{app:degen_2}
%&&\forall k\in\{1,...,{q_s}\}:
%\\
&&
\chi_{s_k}[v_s](r',r)
=
-\sum_{i={q_s}+1}^{\infty}
\frac{\langle\psi_{s}^{(k)}|\hat{n}(r')|\psi_{s}^{(i)}\rangle\langle\psi_{s}^{(i)}|\hat{n}(r)|\psi_{s}^{(k)}\rangle}
{E_{s_i}-E_{s_1}}
+c.c.
\\
&&
\xi_{s_k}[v_s,\delta w](r,r'',r')
=
\stackrel[l\ne k]{}{\sum_{l=1}^{q_s}}\sum_{i=q+1}^{\infty}
\frac{
\langle\psi_{s}^{(k)}|\hat{n}(r)|\psi_{s}^{(l)}\rangle
\langle\psi_{s}^{(l)}|\hat{n}(r'')|\psi_{s}^{(i)}\rangle
\langle\psi_{s}^{(i)}|\hat{n}(r')|\psi_{s}^{(k)}\rangle
}{(E'_{s_k}-E'_{s_l})(E_{s_i}-E_{s_1})}
+c.c.
.
\nonumber
\end{eqnarray}
Considering $\forall k\in\{1,...,{q_s}\}: E_{s_k}(\epsilon\delta w)=\langle\psi_{s}^{(k)}[v_s+\epsilon\delta w]|\hat{H}_s+\epsilon\delta w|\psi_{s}^{(k)}[v_s+\epsilon\delta w]\rangle$,
which satisfies
$E_{s_k}=\lim_{\epsilon\rightarrow 0}E_{s_k}(\epsilon\delta w)$,
$E'_{s_k}$ is defined by
$E'_{s_k}=\lim_{\epsilon\rightarrow 0}\partial E_{s_k}(\epsilon\delta w)/\partial\epsilon$.
$E_{s_k}$ is independent on the choice of $\delta w$
(the ground state energy remains the same whatever the path used to approach it),
i.e. is a functional of $v_s$ only.
$E'_{s_k}$ is however dependent on the choice of $\delta w$
(the variation of the ground state energy can depend on the path used),
i.e. is a functional of $v_s$ and $\delta w$.
Thus, $\xi_{s_k}$ is also a functional of $v_s$ and $\delta w$.

Because $\langle\psi_{s}^{(k)}|\psi_{s}^{(l)}\rangle=\delta_{kl}$, we deduce from eq.~(\ref{app:degen_2}) that $\int_{\mathbb{R}^3} \chi_{s_k}[v_s](r',r) dr=\int_{\mathbb{R}^3} \xi_{s_k}[v_s](r,r'',r') dr=0$.
As a consequence, the $\delta m_k[v_s,\delta w]$ in eq.~(\ref{app:degen_1}) satisfies $\int_{\mathbb{R}^3}\delta m_k[v_s,\delta w](r)dr=0$.
%, which is an expected physical property
%(any perturbation of the density should not perturb the number of particles).
A stronger property that can be demonstrated is that each of the two terms that compose $\delta m_k[v_s,\delta w]$ in eq.~(\ref{app:degen_1}) separately integrate to $0$.

%Eq.~(\ref{app:degen_1}) is invertible for non-constant $\delta v$
%(i.e. considering the equivalence class for constant $\delta v$).

From eq.~(\ref{app:degen_1}), we deduce that any degenerate ground state density first-order perturbation $\delta m[v_s,\delta w](r)=\sum_{k=1}^{q_s} \lambda_k \times \delta m_k[v_s,\delta w](r)$ satisfies eq.~(\ref{eq:diff_degen_1}).

\subsection{{Recovering a linear dependency}}
\label{app:proofs_1_2_2}

%A difficulty with eq.~(\ref{app:degen_3}), to generalize the differential considerations of \S\ref{sec:diff_2} in the non-degenerate case,
%is that the first-order change $\delta m$ does not depend linearly
%on the perturbation $\delta w$.
%The linear dependence is needed for the differential to exist.
%Also, an explicit form for the inverse of eq. (\ref{app:degen_3}) is missing, which represents another difficulty to generalize the differential considerations of \S\ref{sec:diff_3}.
%A third difficulty is that $\xi_s[v_s,\delta w]$ depends on $\delta w$.
%and thus would allow to straightforwardly generalize the considerations of \S\ref{sec:diff_2} and \ref{sec:diff_3} to the degenerate case.
%As all degenerate case considerations are valid up to the equivalence class of degenerate densities, defined by eq. (\ref{app:class_densit}), 
%we can always do a particular choice for the $\{\lambda_k\}$ and tag it as representative of the equivalence class.

We demonstrate that there exists a particular choice for the $\lambda_k$
that always nullifies the $\xi_s[v_s]$ term in eq. (\ref{eq:diff_degen_1})
or equivalently the $\xi_{s_k}[v_s]$ terms in eq. (\ref{app:degen_1}).
%This represents a new result to our knowledge.
Defining
\begin{eqnarray}
\label{app:degen_4}
&&
N_{kl}(r)=\langle\psi_{s}^{(k)}|\hat{n}(r)|\psi_{s}^{(l)}\rangle
\quad \Rightarrow\quad
N_{lk}(r)=N_{kl}^*(r)
\\
&&
W_{li}=\int_{\mathbb{R}^3}\langle\psi_{s}^{(l)}|\hat{n}(r'')|\psi_{s}^{(i)}\rangle \delta w(r'')dr''
\quad \Rightarrow\quad
W_{il}=W_{li}^*
,
\nonumber
\end{eqnarray}
we compute
\begin{eqnarray}
\label{app:degen_5}
&&
\int_{\mathbb{R}^3}\int_{\mathbb{R}^3} \xi_s[v_s,\delta w](r,r'',r') \delta w(r'')\delta w(r') dr'' dr'
\\
%&&=
%\sum_{k=1}^{q}
%\lambda_k
%\stackrel[l\ne k}{\sum_{l=1}^{q}}\sum_{i=q+1}^{\infty}
%\frac{
%\langle\psi_{s}^{(k)}|\hat{n}(r)|\psi_{s}^{(l)}\rangle
%\int_{\mathbb{R}^3}\langle\psi_{s}^{(l)}|\hat{n}(r'')|\psi_{s}^{(i)}\rangle dr''\delta w(r'')
%\int_{\mathbb{R}^3}\langle\psi_{s}^{(i)}|\hat{n}(r')|\psi_{s}^{(k)}\rangle dr'\delta w(r')
%}{(E'_{s_k}-E'_{s_l})(E_{s_i}-E_{s_1})}
%+c.c.
%\nonumber\\
&&\hspace{1.5cm}
=
\sum_{k=1}^{q_s}
\lambda_k
\stackrel[l\ne k]{}{\sum_{l=1}^{q_s}}\sum_{i={q_s}+1}^{\infty}
\frac{
N_{kl}(r)
W_{li}
W_{ik}
}{(E'_{s_k}-E'_{s_l})(E_{s_i}-E_{s_1})}
+c.c.
\nonumber\\
&&\hspace{1.5cm}
=
\sum_{i={q_s}+1}^{\infty}
\frac{1}{E_{s_i}-E_{s_1}}
\sum_{k=1}^{q_s}
\stackrel[l\ne k]{}{\sum_{l=1}^{q_s}}
\lambda_k
\frac{
N_{kl}(r)
W_{li}
W_{ik}
}{E'_{s_k}-E'_{s_l}}
+c.c.
\nonumber\\
&&\hspace{1.5cm}
=
\sum_{i={q_s}+1}^{\infty}
\frac{1}{E_{s_i}-E_{s_1}}
\sum_{k=1}^{q_s}
\sum_{l>k}^{q_s}
\Big(
\lambda_k
\frac{
N_{kl}(r)
W_{li}
W_{ik}
}{E'_{s_k}-E'_{s_l}}
+
\lambda_l
\frac{
N_{lk}(r)
W_{ki}
W_{il}
}{E'_{s_l}-E'_{s_k}}
\Big)
+c.c.
\nonumber\\
&&\hspace{1.5cm}
=
\sum_{i={q_s}+1}^{\infty}
\frac{1}{E_{s_i}-E_{s_1}}
\sum_{k=1}^{q_s}
\sum_{l>k}^{q_s}
(\lambda_k-\lambda_l)
\frac{
N_{kl}(r)
W_{li}
W_{ik}
}{E'_{s_k}-E'_{s_l}}
+c.c.
.
\nonumber
\end{eqnarray}
If we do the particular choice of equal $\lambda_k$'s,
which implies $\lambda_k=1/{q_s}$, eq. (\ref{app:degen_5}) always equals to zero.
This choice
%where all degenerate eigenstates are equally represented, 
allows to neglect the second term in eq. (\ref{app:degen_1}) and thus in eq. (\ref{eq:diff_degen_1}).
Because of aforementioned properties, the resulting $\delta m[v_s,\delta w]$ would still satisfy $\int_{\mathbb{R}^3}\delta m_k[v_s,\delta w](r)dr=0$, which makes sense.

%The resulting eq.~(\ref{app:degen_1}) is of course still invertible for non-constant $\delta v$
%(or defining a equivalence class for the constant $\delta v$).

\subsection{Kernel invertibility}
\label{app:proofs_1_2_3}

$-\chi_s[v_s[n_s]]$ represents a real and symmetric kernel.
It has real eigenvalues $\alpha_j$ related to orthonormal eigenvectors $f_j(r)$, satisfying
(eq. (\ref{app:degen_2}) is used to deduce the second line)
\begin{eqnarray}
\label{app:bound_2}
&&
-\int_{\mathbb{R}^3}\chi_s[v_s[n_s]](r,r')f_j(r')dr'
=
\alpha_j f_j(r)
\\
&&
\alpha_j
=
2
%\frac{2}{\int_{\mathbb{R}^3}f_j^2(r)dr}
\sum_{k=1}^{q_s}
\sum_{i=q_s+1}^{\infty}\lambda_k
\frac{\big|\int_{\mathbb{R}^3}\langle\psi_{s}^{(k)}|\hat{n}(r)|\psi_{s}^{(i)}\rangle f_j(r)dr\big|^2}
{E_{s_i}-E_{s_1}}
.
\nonumber
\end{eqnarray}
We must have $\alpha_j > 0$ because a null $\alpha_j$ would imply a constant $f_j(r)$, which is not possible because of the orthonormalization constraint on the $f_j(r)$.
We recover that $\chi_s[v_s[n_s]]$ is invertible.
Using eq.~(\ref{eq:diff_3b}), we have
\begin{eqnarray}
\label{app:bound_3}
-\int_{\mathbb{R}^3}\chi_s^{-1}[v_s[n_s]](r',r)f_j(r)dr
=
\frac{1}{\alpha_j} f_j(r')
,
\end{eqnarray}
i.e. the real and symmetric kernel $-\chi_s^{-1}[v_s[n_s]]$ has real positive eigenvalues $1/\alpha_j$ related to the orthonormal eigenvectors $f_j(r)$.
Note that the $1/\alpha_j$ are not bounded because the $\alpha_j$ can become arbitrarily close to $0$~\cite[]{Par89,Dre90,Engel2011}.

%i.e., from eq.~(\ref{app:bound_2}), if
%the $\int_{\mathbb{R}^3}\langle\psi_{s}^{(k)}|\hat{n}(r)|\psi_{s}^{(i)}\rangle f_j(r)dr$ cannot become arbitrarily close to $0$ for $k\in\{1,...,q_s\}$ and $i\ge q_s+1$;
%i.e. if the $f_j(r)$ cannot become arbitrarily close to a constant.
%The latter is prevented again by the orthonormalization constraint, which achieves to demonstrate that the $1/\alpha_j$ are bounded.

 \newpage
 \clearpage
\bibliographystyle{apalike}

\end{document}